\documentclass[reqno,12pt]{amsart}
\usepackage{amsmath,epsfig}
\usepackage{amssymb}




\newtheoremstyle{rem}{1.3ex}{1.3ex}{\rmfamily}{}
{\itshape\rmfamily}{}{1.5ex}{}

\newtheorem{theorem}{Theorem}[section]
\newtheorem{lemma}[theorem]{Lemma}

\newtheorem{corollary}[theorem] {Corollary}

\theoremstyle{definition}

\newtheorem{remark}[theorem] {Remark}

\renewcommand{\section}{\secdef\sct\sect}
\newcommand{\sct}[2][default]{\refstepcounter{section}
\setcounter{equation}{0}
\vspace{0.5cm}
\centerline{ \large
\scshape \arabic{section}.\ #1}
\vspace{0.3cm}}
\newcommand{\sect}[1]{
\vspace{0.5cm}
\centerline{\large\scshape #1}
\vspace{0.3cm}}

\renewcommand{\subsection}{\secdef \subsct\sbsect}
\newcommand{\subsct}[2][default]{\refstepcounter{subsection}
\nopagebreak
\vspace{0.5\baselineskip}
{\flushleft\bf \arabic{section}.\arabic{subsection}~\bf #1  }
\nopagebreak}
\newcommand{\sbsect}[1]{\vspace{0.1cm}\noindent
{\bf #1}\vspace{0.1cm}}



\def\nn{\mathbb N}

\def\qq{\mathbb Q}

\def\phi{\varphi }

\def\cala{{\mathcal A}}

\def\calc{{\mathcal C}}

\def\calm{{\mathcal M}}
\def\caln{{\mathcal N}}
\def\calo{{\mathcal O}}

\def\cala{{\mathcal A}}

\newcommand{\eps}{\varepsilon}

\newcommand{\supp}{{\operatorname {supp}}}

\newcommand{\R}     {\mathbb{R}}

\newcommand{\N}     {\mathbb{N}}

\newcommand{\E}     {\mathbb{E}}

\def\1{{\mathchoice {1\mskip-4mu\mathrm l}
                    {1\mskip-4mu\mathrm l}
                    {1\mskip-4.5mu\mathrm l} {1\mskip-5mu\mathrm l}}}


\newcommand*{\pe}{\mathbb{P}} 
\newcommand*{\lambdamax}{\lambda^\ast_n}
\newcommand*{\lambdamin}{\lambda^\ast_1}






\newcommand*{\one}{\frac{1}{n}} 
\newcommand*{\spann}{\operatorname{Span}}


\newcommand*{\limsupn}{\limsup\limits_{n \rightarrow \infty}}
\newcommand*{\limn}{\lim\limits_{n \rightarrow \infty}}

\newcommand*{\liminfn}{\liminf\limits_{n \rightarrow \infty}}



\newtheorem{theo}[equation]{Theorem}
\newtheorem{propos}[equation]{Proposition}

\newtheorem{lem}[equation]{Lemma}

\newtheorem{vorbem}[equation]{Note}

\newtheorem{vorremark}[equation]{Remark}



\newcommand{\ba}{\begin{array}}
\newcommand{\ea}{\end{array}}
\newcommand{\be}{\begin{equation}}
\newcommand{\ee}{\end{equation}}
\newcommand{\bea}{\begin{eqnarray}}
\newcommand{\eea}{\end{eqnarray}}
\newcommand{\beas}{\begin{eqnarray*}}
\newcommand{\eeas}{\end{eqnarray*}}





\setlength{\textheight}{25cm}
\setlength{\textwidth}{15cm}
\setlength{\topmargin}{0in}
\setlength{\headheight}{0.12in}
\setlength{\headsep}{.40in}
\setlength{\parindent}{1pc}
\setlength{\oddsidemargin}{-0.1in}
\setlength{\evensidemargin}{-0.1in}

\marginparwidth 48pt
\marginparsep 10pt
\oddsidemargin-5mm
\topmargin -18pt
\headheight 12pt
\headsep 25pt
\footskip 30pt
\textheight 650pt
\textwidth 170mm
\columnsep 10pt
\columnseprule 0pt

\sloppy
\parskip 0.8ex plus0.3ex minus0.2ex
\parindent1.0em

\begin{document}

\title[Large deviations for the largest eigenvalue]{\large
Large deviations for\\\vspace{2mm}the largest eigenvalue of disordered bosons \\\vspace{2mm}and disordered fermionic systems} 

\author[Katrin Credner and Peter Eichelsbacher]{} 
\maketitle
\thispagestyle{empty}
\vspace{0.2cm}

\centerline{\sc Katrin Credner, Peter Eichelsbacher\footnote{Corresponding author: Ruhr-Universit\"at Bochum, Fakult\"at f\"ur Mathematik,
NA 3/66, D-44780 Bochum, Germany, {\tt peter.eichelsbacher@ruhr-uni-bochum.de}
\\All authors have been supported by Deutsche Forschungsgemeinschaft via SFB/TR 12}}
\centerline{\sc (Ruhr-Universit\"at Bochum)}


\vspace{2 cm}

\begin{quote}
{\small {\bf Abstract:} }
We prove a large deviations principle
for the largest eigenvalue of a class of biorthogonal and multiple orthogonal
polynomial ensembles that includes a matrix model of Lueck, Sommers and Zirnbauer for disordered bosons
and Angelesco ensembles. Moreover we consider matrix ensembles in mesoscopic physics.
\end{quote}

\bigskip\noindent
{\bf AMS 2000 Subject Classification:} Primary 60F10; Secondary 15B52, 33C45, 60B20.

\medskip\noindent
{\bf Key words:} Large deviations, biorthogonal ensembles, multiple orthogonal polynomial ensembles,
disordered bosons, random matrix ensembles, largest eigenvalue

\setcounter{section}{0}

\medskip
\section{Introduction}

Muttalib introduced in \cite{Muttalib:1995} a new model of random matrices with two-body interactions. This approach was motivated by physics, where the usage of a standard random matrix ansatz for modelling the behaviour of disordered conductors leads to a small deviation from observed results. Although it was known using methods of perturbation theory that a small correction would be necessary for exact results (see \cite{LeeStone:1985}), it was not known how this model could be solved mathematically. Muttalib started at this point and contributed a solvable random matrix model that considered a small correction of the classical one-body model needed for e.g.~metallic conductors. Muttalib's model has the probability density function
\begin{equation} \label{biorth}
q_n(x_1,\ldots, x_n) = c_n \prod_{1 \leq i<j \leq n} |x_i-x_j| |x_i^{\theta}-x_j^{\theta}| \prod_{i=1}^n e^{-V(x_i)} \,\, 1_{\Sigma^n}(x_1,\ldots,x_n)\, ,
\end{equation}
where $\theta$ is a fixed positive number, the $x_j, \, 1\leq j\leq n$, are the eigenvalues of a $n\times n$ matrix $X$, $V$ is a weight function specifying the concrete model. The fraction $c_n$ normalises the density and $\Sigma$ is the domain of the eigenvalues.
Since density (\ref{biorth}) can be rewritten with the help of biorthogonal polynomials, Borodin \cite{Borodin:1999} introduced the term {\em biorthogonal ensembles} for ensembles with density (\ref{biorth}). Since one has a repulsion between the different eigenvalues $x_i$ and a repulsion between the $x_i^{\theta}$'s, the point
process described in \eqref{biorth} is sometimes called a point process with two-particle interactions. Moreover these point processes are determinantal 
point processes (see \cite[Definition 4.2.11]{Zeitounibook}). 

This paper is motivated by a special biorthogonal ensemble that arises from a random matrix model for {\it disordered bosons} that was proposed by Lueck, Sommers, and Zirnbauer in \cite{Lueck/Sommers/Zirnbauer:2006}. The model amounts to the product of a Wishart matrix and the fundamental matrix of the standard symplectic form, 
and on the level of eigenvalues, interpreted as characteristic frequencies of disordered quasi-particles, one obtains the joint density
\begin{equation} \label{densitybosonic}
{\tilde q}_{n,\alpha}(x_1,\ldots, x_{n})= \frac{1}{Z_{n, \alpha, \tau}} 
\prod_{1\leq i<j \leq n}|x_i-x_j|\left|x_i^{2}-x_j^{2}\right|   \prod_{i=1}^{n}x_i ^{\alpha}\mathrm{e}^{-\tau x_i} \, \,1_{[0,\infty)^n}(x_1,\ldots, x_{n})
\end{equation}
with $\alpha \in \nn \cup \{0\}$ and $\tau >0$. 
In \cite{Lueck/Sommers/Zirnbauer:2006} it was shown
that the correlation functions of the frequencies in the bulk of the spectrum are in the Gaussian Unitary Ensemble universality class, yet a novel
scaling behaviour is found at the low frequency end of the spectrum. Other applications of biorthogonal ensembles to physics are discussed in 
\cite{Tierz:2010} motivated by matrix models for Chern-Simons theory.  For any sequence of random numbers $x_1, \ldots, x_n$ consider
$
L_n : = \frac 1n \sum_{i=1}^n \delta_{x_i},
$
the empirical distribution or {\it empirical measure} of these values (a random probability measure on $\Bbb R$). Define
the mean empirical measure $\bar{L}_n = \E L_n$ by the relation $\langle \bar{L}_n, f \rangle = \E \langle L_n, f \rangle$ for all
continuous and bounded functions $f : \Bbb R \to \Bbb R$. Now in \eqref{densitybosonic} we consider $\tau = n$ and $\alpha \in \nn \cup \{0\}$ be fixed. 
This is the case when the Gaussian random variables
in the random matrix model in \cite{Lueck/Sommers/Zirnbauer:2006} are chosen to be random variables with zero mean and variance $1/n$ .
One result in \cite{Lueck/Sommers/Zirnbauer:2006} is that with $\tau = n$ the sequence
$(\bar{L}_n)_n$ converges weakly to a probability measure on $\Bbb R$ with Lebesgue density
\begin{equation} \label{bosonrho}
\varrho_{\infty} (t) : = \frac{1}{2\pi} (t/b)^{-1/3} \bigl( (1+ \sqrt{1-t^2/b^2})^{1/3} -(1- \sqrt{1-t^2/b^2})^{1/3} \bigr), 
\end{equation}
for $0<t\le b:= 3\sqrt{3}$. Hence one would expect that the largest frequency
$$
x^* := x_n^* := \max_{j=1}^n x_j
$$ 
converges almost surely to the right endpoint $3 \sqrt{3}$ of the support of $\varrho_{\infty}$. One aim of this paper is to complement the 
Lueck-Sommers-Zirnbauer results by a {\it large deviations principle} for the largest characteristic frequency $x^*$. 

Recently, in \cite{ClaeysRomano:2015} it was shown that the biorthogonal polynomials associated to models \eqref{biorth} satisfy a recurrence relation and a Christoffel-Darboux formula if $\theta \in \qq$. Moreover the authors express the equilibrium measure associated to this model. 

Actually, we will study large deviations principles for largest eigenvalues
in a broader framework that encompasses not only ensembles like \eqref{biorth}, but also takes care
of weight functions $w_n$, depending on $n$, and determinantal parts like
$ \prod_{i < j} |x_i-x_j|^{\beta} (x_i^{\theta} - x_j^{\theta})$ for any $\beta >0$. Moreover we  consider random matrix ensembles in mesoscopic physics subsuming matrix versions of {\it all classical symmetric spaces}. We will also obtain large deviations results
for {\it multiple orthogonal polynomial} ensembles (see \cite{Kuijlaars:2010}). 

In \cite{BenArous/Dembo/Guionnet:2001}, Ben Arous, Dembo and Guionnet proved a large deviations result for the GOE ensemble where
the joint probability density of the eigenvalues is given by
\begin{equation}
q_n(x_1,\ldots, x_n) = c_n \prod_{1 \leq i<j \leq n} |x_i-x_j| \prod_{i=1}^n e^{- \frac n4 x_i^2}.
\end{equation}
Here the sequence of largest eigenvalues $(x_n^*)_n$ fulfils a LDP in $\R$ with speed $n$ and good rate function
\bea
\nonumber J(x)& =& \left\{\begin{array}{l@{\quad :\quad}l} \int_2^x \sqrt{(t/2)^2-1}\, dt
 & x\geq 2 \\ \infty &x< 2\end{array}\right.\, \, .
\eea
Recall that a family of probability measures $(\mu_{\varepsilon})_{\varepsilon >0}$ on a topological space
$X$ is said to obey a large deviations principle (LDP) with speed $\varepsilon^{-1}$ and good rate function $I: X \to [0,\infty]$ if $I$ is lower semi-continuous
and has compact level sets $N_L := \{x \in X: I(x) \leq L \}$, for every $L \in[0, \infty)$, and
$$
\liminf_{\varepsilon \to 0} \varepsilon \log \mu_{\varepsilon}(G) \geq - \inf_{x \in G} I(x)
$$
for every open $G \subseteq X$ and
$$
\limsup_{\varepsilon \to 0} \varepsilon \log \mu_{\varepsilon}(A) \leq - \inf_{x \in A} I(x)
$$
for every closed $A \subseteq X$.
The result in \cite{BenArous/Dembo/Guionnet:2001} has been generalized in \cite{Zeitounibook} to joint densities of the form 
\begin{equation} \label{hamiltonian}
q_n(x_1,\ldots, x_n)= c_n  \prod_{1\leq i<j \leq n}
\left|x_i-x_j\right|^{\beta} e^{-n \sum_{i=1}^n V(x_i)} , 
\end{equation}
for $\beta >0$, partition function $c_n^{-1}$ and continuous weight functions $V$ which satisfies $\liminf_{|x| \to \infty} V(x) / (\beta' \log |x|) >1$
for some $\beta' >1$ with $\beta' \geq \beta$ (see Theorem 2.6.6 in  \cite{Zeitounibook}).

Recently, in \cite{Fanny:2015} the author proves a large deviations principle for the largeste eigenvalue of Wigner matrices without Gaussian tails, namely such that the distrution tails $P(|X_{1,1}| >t)$ and $P(|X_{1,2}| >t)$ behave like $e^{-b t^{\alpha}}$ and $e^{-a t^{\alpha}}$ respectively for some $a,b \in (0, \infty)$ and
$\alpha \in (0,2)$. The large deviations principle is of speed $n^{\alpha/2}$ and with an explicit good rate function depending only on the tail distributions of the
$X_{i,j}$.

The paper is organised as follows. Section 2 is devoted to the formulation of a large deviations principle for the largest eigenvalue
$x_n^*$ of generalised biorthogonal matrix ensembles. The examples presented in Section 3 include the random matrix
model of disordered bosons in \cite{Lueck/Sommers/Zirnbauer:2006}, biorthogonal Laguerre ensembles 
considered in \cite{Borodin:1999} as well as Wigner-Dyson ensembles, Bogoliubov-de Gennes ensembles and Chiral ensembles.
In Section 4 we formulate large deviations
principles for a special multiple orthogonal ensemble, the Angelesco ensemble, see \cite{Kuijlaars:2010}.
In Sections 5 we present the proofs of our large deviations principles.

\section{Large deviations for $\lambda_n^*$ of biorthogonal ensembles}

In this section, we will derive a LDP for the bosonic ensemble, where 
the density of the joint distribution of the eigenvalues is of form \eqref{densitybosonic}.
Obviously, \eqref{densitybosonic} is a special case of the density
\begin{equation}\label{density}
q_n(\lambda_1,\ldots, \lambda_{p(n)})=\frac{1}{Z_n} \prod_{1\leq i<j \leq p(n)}|\lambda_i-\lambda_j|
\left|\lambda_i^{\theta}-\lambda_j^{\theta}\right|  \prod_{i=1}^{p(n)}w_n(\lambda_i)^n \,\, 1_{\Sigma^{p(n)}}(\lambda_1,\ldots, \lambda_{p(n)}),
\end{equation}
with $\theta \in \N$, partition function $Z_n$ and continuous weight functions $w_n:\R\rightarrow\R_0^+$. For $\theta$ even, 
$\Sigma$ is a closed subset of $[0,\infty)$ while for $\theta$ odd, $\Sigma$ is a closed subset of $\R$. The 
sequence $(p(n))_n$ must satisfy
$$
\lim_{n\rightarrow \infty} \frac{p(n)}{n}=\kappa \in (0,\infty).
$$
Note that for $p(n)=n$ and weight functions $w$ independent of $n$, \eqref{density} subsumes the density for the eigenvalue distribution 
for biorthogonal ensembles as introduced in \cite{Borodin:1999}. For $\theta =1$, the density \eqref{density} is the same as the density
considered in \cite{Eichelsbacher/Stolz:2006} (see formula (4.1) in \cite{Eichelsbacher/Stolz:2006} with $\gamma =1$ and $\beta =2$). 
For $\theta=1$, $p(n)=n$ and $w(x)=\mathrm{e}^{-\frac{1}{2}x^2}$ we 
also recover the classical GUE. 

Throughout the whole section, we write ${\mathcal N}(f)$ for the set of zeros of a function $f:\R\rightarrow \R$ and we assume that the sequence 
of weight functions $(w_n)_n$ satisfies the following:
\begin{itemize}
\item[(a1)] there exists a continuous function $w: \Sigma \to [0,
\infty )$ such that 
\begin{itemize}
\item $\# \caln(w) < \infty,\ \caln(w_n) \subseteq \caln(w)$ for 
large $n$. (a1.1) 
\item As $n \to \infty$, $w_n$ converges to $w$, and $\log w_n$ to
$\log w$ uniformly on compact sets. (a1.2)  
\item  $\log w$ is Lipschitz on compact sets away from $\caln(w)$. (a1.3)
\end{itemize}
\item[(a2)] If $\Sigma$ is unbounded, then there exists $n_0 \in
\nn$ such that
$$ \lim_{x \to \pm \infty} |x|^{(\theta+1) (\kappa + \epsilon)} \sup_{n \ge n_0} w_n(x) = 0$$
for some fixed $\epsilon > 0$.
\end{itemize}
Moreover we assume that the partition functions satisfy
\begin{equation} \label{assZ}
\lim_{n \to \infty} \frac 1n \log \frac{Z_{n-1}}{Z_n} =: \xi,
\end{equation}
where $\xi$ is a constant.

\begin{remark} \label{correction}
We know from \cite[Theorem 2.1]{Eichelsbacher/Sommerauer/Stolz:2011}, that the empirical measure $L_n=\one\sum\limits_{j=1}^n\delta_{\lambda_j}$ of the eigenvalues of a biorthogonal random matrix satisfying (a1) and (a2) obeys a large deviations principle with speed $n^2$ and a good rate function. In some examples
the function  $w_n$ will be of the form $w_n(x) = x^{\frac{\alpha}{n}} \phi_n(x)$ with a fixed $\alpha \geq 0$ and some $\phi_n(x)$. But with the first factor $x^{\frac{\alpha}{n}}$,
$\log w_n$ does not convergence uniformly to some $\log w$ on compact sets. This problem, which arises in Examples 2.3 and 2.5 in \cite{Eichelsbacher/Sommerauer/Stolz:2011}, can be fixed easily. We rewrite \eqref{density} as
\begin{equation}\label{density2}
q_n(\lambda_1,\ldots, \lambda_{p(n)})=\frac{1}{Z_n} \prod_{1\leq i<j \leq p(n)}|\lambda_i-\lambda_j|
\left|\lambda_i^{\theta}-\lambda_j^{\theta}\right|  \prod_{i=1}^{p(n)} \lambda_i^{\alpha} \phi_n(\lambda_i)^n \,\, 1_{\Sigma^{p(n)}}(\lambda_1,\ldots, \lambda_{p(n)}).
\end{equation}
Then we obtain that the empirical measure $L_n=\one\sum\limits_{j=1}^n\delta_{\lambda_j}$ of the eigenvalues of this biorthogonal random matrix, where
$\phi_n$  and a limiting function $\phi$ satisfy (a1) and (a2), obeys a large deviations principle with speed $n^2$ and a good rate function. The proof is step by step
the proof of \cite[Theorem 2.1]{Eichelsbacher/Sommerauer/Stolz:2011}.
\end{remark}

\begin{remark}
In the proof of our main theorem, one basic step is to integrate out the density of all eigenvalues with respect to one and to rewrite some of the remaining parts to a density that belongs to an eigenvalue distribution that misses the integrated eigenvalue.
Since we do not necessarily have that $p(n)=n$, we should clarify what we mean by having ``one eigenvalue less'': We want to analyse $q_n(\lambda_1,\ldots,\lambda_{p(n)})$, which is the joint probability distribution of the $p(n)$ eigenvalues of an $n\times n$ matrix. We do so by inserting the probability distribution of $p(n)-1$ eigenvalues. This is not necessarily $q_{n-1}$, but, since $(p(n))_n$ is a subsequence of the sequence of natural numbers $(n)_n$, there is a $p(n-k)$ with $n>k\in\N$ and $p(n-k)=p(n)-1$. For practical reasons we assume that we have $k=1$, i.e.~that we have to go back in the sequence $(n)_n$ by just one step to have one eigenvalue less. This implies that we have $p(n-1)=p(n)-1$. Further, we denote with (with a slight abuse of notation) $q_{n-1}(\lambda_1,\ldots,\lambda_{p(n)-1})$ the eigenvalue distribution of $p(n)-1$ eigenvalues.
We also have to take care of the normalisation constants of the two densities. Their fraction, logarithmised and divided by $n$, should converge to a constant (see \eqref{assZ} above). Without restriction, we omit here the first eigenvalue $\lambda_1$. We also replace, with a slight abuse of notation (since, in Equation (\ref{5.67}), $w_{p(n)}$ and $w_{p(n)-1}$ would be more correct), $w_n$ by $w_{n-1}$. This last replacement is technically not necessary; we are at freedom to use any transformation of $w_n$ that leads to an exponentially equivalent density of the eigenvalues, as long as it holds that 
\be
\label{5.67} \limn \frac{w_n(x)}{w_{n-1}(x)}=1\, .
\ee
We obtain
\be 
Z_{n-1} = \int_{\Sigma^{p(n)-1}}\prod\limits_{j=2}^{p(n)} (w_{n-1}(\lambda_j))^{n-1} \prod\limits_{2\leq i<j\leq p(n)} |\lambda_i-\lambda_j| |\lambda_i^{\theta}-\lambda_j^{\theta}|\, \prod_{j=2}^{p(n)} d\lambda_j\, .
\ee
To be clear: we have some freedom of choice for the functions $w_n$, and, to resume the discussion above, it is also possible to use $w_{n-k}$ in the normalisation constant above. Since the sequence of the $(w_n)_n$ converges and since we are only interested in the logarithmised and scaled fraction of $Z_{n-1}$ and $Z_n$, this does not make any difference. With the same reasoning, the exponent $n-1$ of $w_{n-1}$ could also be e.g.~$n-k$, with no change in the outcome.
\end{remark}
We will study the asymptotic behaviour of the largest eigenvalue $\lambda_n^*:= \max_{j=1}^{p(n)} \lambda_j$ of $\lambda$ 
for $\lambda = (\lambda_1, \ldots, \lambda_{p(n)}) \in \Sigma^{p(n)}$. The main theorem now reads as follows:

\begin{theo}\label{maintheoremlambdamax}
Let $\lambda_1,\ldots, \lambda_{p(n)}$ be the eigenvalues of a biorthogonal ensemble, that is, with joint eigenvalue density (\ref{density}). 
Under the assumptions formulated above ((a1), (a2) and \eqref{assZ}) the sequence  $(\lambda_n^*)_n$ satisfies a large deviations principle in 
$\Sigma$ with speed $n$ and good rate function
\bea
\nonumber I(x)& =& \left\{\begin{array}{l@{\quad :\quad}l}- \kappa  \int \bigl( \log|x-y| +\log |x^{\theta}-y^{\theta}| \bigr) 
d\mu_w(y) -\log w(x) - \zeta & x\geq b_w \\ \infty &x< b_w\end{array}\right.\, ,
\eea
where
\be
\nonumber\zeta:=\kappa\int\log w(y)d\mu_w(y)+\xi\, ,
\ee
$\xi$ as defined in (\ref{assZ}), $\mu_w$ is the limiting measure of the empirical measure of the eigenvalues $\lambda_1,\ldots, \lambda_{p(n)}$, and $b_w$ the right endpoint of its support. Note that $\zeta$ does not depend on $x$.
\end{theo}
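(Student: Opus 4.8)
The plan is to run the largest-eigenvalue scheme of Ben Arous--Dembo--Guionnet (in the form of the proof of Theorem~2.6.6 in \cite{Zeitounibook}), the new features being the two-particle factor $|\lambda_i^\theta-\lambda_j^\theta|$ and the $n$-dependent, possibly vanishing weights $w_n$. Write $\nu_n$ for the probability measure with density \eqref{density}, $\nu_{n-1}$ for the $(p(n)-1)$-point ensemble with partition function $Z_{n-1}$ (the same ensemble with the index shifted), and $g(x,y):=\log|x-y|+\log|x^\theta-y^\theta|$, so that $I(x)=-\kappa\int g(x,y)\,d\mu_w(y)-\log w(x)-\zeta$ on $[b_w,\infty)$. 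The only external input is the speed-$n^2$ LDP for the empirical measure from Remark~\ref{correction}, which gives that, under $\nu_n$ and equally under $\nu_{n-1}$, the empirical measure converges to $\mu_w$ superexponentially fast at speed $n$: $\nu_n(d(L_n,\mu_w)\geq\delta)\leq e^{-cn^2}$ for $n$ large. I would first check that $I$ is a good rate function: it is lower semicontinuous because $x\mapsto-\int g(x,y)\,d\mu_w(y)$ is lsc ($g(x,\cdot)$ is bounded above on the compact $\supp\mu_w$) and $-\log w$ is continuous off $\caln(w)$ and lsc at its zeros by (a1.3); and it has compact level sets because, when $\Sigma$ is unbounded, (a2) forces $\log w(x)\leq-(\theta+1)(\kappa+\epsilon)\log|x|+o(1)$ while $\int g(x,y)\,d\mu_w(y)\leq(\theta+1)\log|x|+O(1)$ (using that $\supp\mu_w$ is compact), so $I(x)\to\infty$ as $|x|\to\infty$. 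One should also record, from the equilibrium characterisation of $\mu_w$, that $\int g(b_w,y)\,d\mu_w(y)$ is finite, that $I$ is right-continuous at $b_w$ with $I(b_w)=0$, and that $I>0$ strictly on $(b_w,\infty)\cap\Sigma$; this last fact underlies the confinement below.

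The computational core is the one-variable integration. Singling out $\lambda_1$ and pulling its interaction with the remaining coordinates out of the integral, Fubini gives the exact identity
\[
\nu_n(\lambda_1\in B)=\frac{Z_{n-1}}{Z_n}\int_B w_n(z)^n\,\E_{\nu_{n-1}}\!\Bigl[\prod_{j\geq2}|z-\lambda_j|\,|z^\theta-\lambda_j^\theta|\;\frac{\prod_{j\geq2}w_n(\lambda_j)^n}{\prod_{j\geq2}w_{n-1}(\lambda_j)^{n-1}}\Bigr]dz
\]
for Borel sets $B$. By the substitution discussed after \eqref{5.67} the ratio of weight products equals $\exp(\sum_{j\geq2}\log w(\lambda_j)+o(n))$, while $\prod_{j\geq2}|z-\lambda_j|\,|z^\theta-\lambda_j^\theta|=\exp(\sum_{j\geq2}g(z,\lambda_j))$. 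For $z$ bounded away from $\supp\mu_w$, the map $y\mapsto g(z,y)+\log w(y)$ is bounded and continuous near $\supp\mu_w$ (using $\mu_w$-integrability of $\log w$ from (a1.3)), so, using $L_{n-1}\to\mu_w$ at speed $n^2$ under $\nu_{n-1}$ to discard the $e^{-cn^2}$-event and (a2) to discard configurations with a $\lambda_j$ escaping to infinity, the inner expectation is $\exp\bigl(p(n)\int(g(z,y)+\log w(y))\,d\mu_w(y)+o(n)\bigr)$ uniformly in such $z$. Combined with $p(n)/n\to\kappa$ and $\tfrac1n\log(Z_{n-1}/Z_n)\to\xi$, this gives $\tfrac1n\log\nu_n(\lambda_1\in B)=-\inf_{z\in B}I(z)+o(1)$ for small balls $B$ with $\bar B\cap\supp\mu_w=\emptyset$.

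For the \emph{upper bound}, the union bound $\nu_n(\lambda_n^*\in A)\leq p(n)\,\nu_n(\lambda_1\in A)$ together with the speed-$n^2$ LDP for $L_n$ (which disposes of configurations with $\lambda_n^*$ below the bulk, as these force $L_n$ macroscopically away from $\mu_w$) reduces the estimate for a closed $A$ to $\nu_n(\lambda_n^*\geq x)\leq p(n)\,\nu_n(\lambda_1\geq x)$ for $x\geq b_w$; since $z\geq x\geq b_w$ lies off $\supp\mu_w$, the previous paragraph gives $\nu_n(\lambda_1\geq x)\leq e^{-n(I(x)-\epsilon)+o(n)}$. Exponential tightness (reducing closed sets to compacts) follows from the same identity once the decay of $w_n(z)^n$ from (a2) is used to beat $\int g(z,y)\,d\mu_w(y)\leq(\theta+1)\log|z|+O(1)$. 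In particular $\nu_n(\lambda_n^*\geq b_w+\eta)\to0$, so $\lambda_n^*\to b_w$ in probability, and likewise under $\nu_{n-1}$. For the \emph{lower bound}, fix an open $G$, a point $x\in G\cap(b_w,\infty)$, and $\delta>0$ small with $B(x,\delta)\subseteq G$ and $x-\delta/2>b_w$. On $E_n:=\{\lambda_1\in B(x,\delta/4)\}\cap\{\lambda_j<x-\delta/2\ \forall j\geq2\}\cap\{d(L_{n-1},\mu_w)<\delta'\}$ one has $\lambda_n^*=\lambda_1\in B(x,\delta)$, and since $z\in B(x,\delta/4)$ is separated from the range of the remaining $\lambda_j$ by $\delta/4$, the identity above gives
\[
\nu_n(E_n)\geq\frac{Z_{n-1}}{Z_n}\,\delta\,\bigl(\inf_{z\in B(x,\delta/4)}w_n(z)\bigr)^{n}\exp\!\Bigl(p(n)\Bigl[\int\!\bigl(g(x,y)+\log w(y)\bigr)d\mu_w(y)-\epsilon(\delta,\delta')\Bigr]\Bigr)\nu_{n-1}(E_n') ,
\]
where $E_n'$ is the confinement-and-concentration event for the other coordinates, whose $\nu_{n-1}$-probability tends to $1$ by the speed-$n^2$ LDP and the confinement just proved (here $x-\delta/2>b_w$ is essential, so that $\supp\mu_w$ lies in the allowed region). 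Taking $\tfrac1n\log$, then $n\to\infty$, then $\delta,\delta'\to0$ gives $\liminf\tfrac1n\log\nu_n(\lambda_n^*\in G)\geq-I(x)$; optimising over $x\in G\cap(b_w,\infty)$ and using right-continuity of $I$ at $b_w$ to reach the endpoint completes the proof.

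The main obstacles lie at the bulk edge and in the weights. First, making the concentration of the inner expectation rigorous requires separating the event $\{d(L_{n-1},\mu_w)<\delta'\}$ from configurations with a $\lambda_j$ at infinity (controlled by (a2)) and, in the upper bound, from configurations with a $\lambda_j$ near $z$ (harmless there, and excluded in $E_n$ by the $\delta/4$-separation); this is why the lower bound is run only for $x>b_w$ and then extended to $b_w$ by right-continuity. Second, the $\mu_w$-integrability of $\log w$ and the finiteness and right-continuity at $b_w$ of $x\mapsto\int g(x,y)\,d\mu_w(y)$ rest on regularity of the equilibrium measure at its right endpoint, and on $\caln(w)$ not interfering with $\supp\mu_w$ — which is why weights $x^{\alpha/n}\phi_n(x)$ are reduced to their $\phi$-part as in Remark~\ref{correction}. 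Third, and this is the crux, one needs the equilibrium inequality for $\mu_w$ to be strict off its support — so that $I>0$ strictly past $b_w$, which is exactly what drives the confinement $\lambda_n^*\to b_w$ used to close the lower bound — together with $\xi$ being the associated equilibrium constant, so that $\min I=I(b_w)=0$; both are standard but must be verified for the ensembles at hand.
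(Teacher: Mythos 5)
Your proposal is correct and follows essentially the same route as the paper's proof: the same one-eigenvalue-out decomposition producing the factor $Z_{n-1}/Z_n$ and the weight-ratio correction \eqref{5.67}, exponential tightness from (a2), goodness of $I$ from the logarithmic bound plus (a2), the speed-$n^2$ LDP for the empirical measure to localise $L_{n-1}$ near $\mu_w$ and to dispose of $\{\lambda_n^*<b_w\}$, the union bound over the exchangeable coordinates for the upper bound, and the confinement event $\{\lambda_1\in[x,y],\ \max_{j\ge2}|\lambda_j|\le r\}$ with $b_w<r<x$ for the lower bound. The only cosmetic difference is that the paper packages the inner expectation via the functionals $\Phi_n,\tilde\Phi_n$ and a sup/inf over a weak ball $B(\mu_w,\delta)$ rather than your "concentration of the inner expectation" phrasing; the content is identical.
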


Note that Theorem 2.1 from \cite{Eichelsbacher/Sommerauer/Stolz:2011} ensures that the limiting measure $\mu_w$ exists.
\bigskip

\begin{corollary}\label{corrlambdamax}
Consider the slightly different density
\begin{equation}\label{density2}
q_n(\lambda_1,\ldots, \lambda_{p(n)})=\frac{1}{Z_n} \prod_{1\leq i<j \leq p(n)}
\left|\lambda_i^{\theta}-\lambda_j^{\theta}\right|^{\beta}  \prod_{i=1}^{p(n)}w_n(\lambda_i)^n \,\, 1_{\Sigma^{p(n)}}(\lambda_1,\ldots, \lambda_{p(n)}),
\end{equation}
with $\beta >0$, $\theta \in \N$. Under the assumptions (a1), (a2) and \eqref{assZ} the sequence  $(\lambda_n^*)_n$ satisfies a large deviations principle in 
$\Sigma$ with speed $n$ and good rate function
\bea
\nonumber I(x)& =& \left\{\begin{array}{l@{\quad :\quad}l}- \kappa \beta \int \log |x^{\theta}-y^{\theta}| d\mu_w(y) -\log w(x) - \zeta & x\geq b_w \\ \infty &x< b_w\end{array}\right.\, ,
\eea
where
\be
\nonumber\zeta:=\kappa\int\log w(y)d\mu_w(y)+\xi\, ,
\ee
$\xi$ as defined in (\ref{assZ}), $\mu_w$ is the limiting measure of the empirical measure of the eigenvalues $\lambda_1,\ldots, \lambda_{p(n)}$, and $b_w$ the right endpoint of its support. 
\end{corollary}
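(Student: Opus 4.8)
\medskip\noindent{\bf Proof of Corollary~\ref{corrlambdamax} (outline).}
Throughout, $L_n$ denotes the empirical measure of $\lambda_1,\dots,\lambda_{p(n)}$; by the methods of \cite{Eichelsbacher/Sommerauer/Stolz:2011} it obeys, for the ensemble \eqref{density2}, a large deviations principle with speed $n^{2}$ whose good rate function has the weighted equilibrium measure $\mu_w$ from the statement as its unique minimiser (so in particular $L_n\to\mu_w$ with concentration at speed $n^{2}$). The plan is to run the proof of Theorem~\ref{maintheoremlambdamax} essentially word for word, with the two-particle weight $\log|x-y|+\log|x^{\theta}-y^{\theta}|$ replaced throughout by $\beta\log|x^{\theta}-y^{\theta}|$. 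The first step is to reduce the assertion about $\lambda_n^{*}=\max_{j\le p(n)}\lambda_j$ to one about the law of a single eigenvalue: by symmetry of \eqref{density2}, $\{\lambda_n^{*}\in F\}\subseteq\bigcup_{j}\{\lambda_j\in F\}$, so $\pe(\lambda_n^{*}\in F)\le p(n)\,\pe(\lambda_1\in F)$ for closed $F$ and, since $n^{-1}\log p(n)\to 0$, the large-deviation upper bound is inherited from that of $\lambda_1$; for the lower bound one conditions the remaining $p(n)-1$ eigenvalues on having empirical measure close to $\mu_w$ --- an event of probability $1-e^{-cn^{2}}$ --- on which $\lambda_n^{*}$ coincides with the one free eigenvalue once the latter sits to the right of $b_w$. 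Exponential tightness of $\lambda_n^{*}$ comes from (a2), and the finiteness of $\caln(w)$ together with (a1.1)--(a1.3) causes no additional trouble, so it suffices to control the one-point marginal $f_n$ of $\lambda_1$ under \eqref{density2}.

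Next I carry out the ``integrate out one eigenvalue'' step. Deleting $\lambda_1$, and writing $q_{n-1}$ and $L_{n-1}$ for the resulting $(p(n)-1)$-point ensemble and its empirical measure (the relabelling $w_n\mapsto w_{n-1}$ being handled by the preceding remarks and \eqref{5.67}), one obtains, for $x\in\Sigma$,
\[
f_n(x)=\frac{Z_{n-1}}{Z_n}\,w_n(x)^{n}\,\E_{q_{n-1}}\!\Big[\prod_{j=2}^{p(n)}\big(|x^{\theta}-\lambda_j^{\theta}|^{\beta}\,w_n(\lambda_j)\big)\Big]\,(1+o(1))^{n},
\]
the extra factor $\prod_{j\ge 2}w_n(\lambda_j)$ (and the harmless $(1+o(1))^{n}$) coming from re-expressing $\prod_{j\ge 2}w_n(\lambda_j)^{n}$ through $\prod_{j\ge 2}w_{n-1}(\lambda_j)^{n-1}$. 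Writing the bracket as $\exp\!\big((p(n)-1)\int(\beta\log|x^{\theta}-y^{\theta}|+\log w_n(y))\,dL_{n-1}(y)\big)$, and using $L_{n-1}\to\mu_w$ with concentration at speed $n^{2}$ --- together with the usual regularisation of the logarithmic singularity of $y\mapsto\log|x^{\theta}-y^{\theta}|$ (truncation from below for the upper estimate, a no-clustering bound for the lower one) and (a2), which this time controls the combination $|x^{\theta}-y^{\theta}|^{\beta}w_n(y)$ at infinity --- one gets
\[
\tfrac1n\log f_n(x)\;\longrightarrow\;\log w(x)+\beta\kappa\!\int\!\log|x^{\theta}-y^{\theta}|\,d\mu_w(y)+\xi+\kappa\!\int\!\log w\,d\mu_w=-I(x),
\]
using that $\xi+\kappa\int\log w\,d\mu_w$ is precisely $\zeta$, which is exactly why the weight-mismatch term is absorbed. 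Standard arguments (local uniformity of this convergence away from $\caln(w)$, exponential tightness, and continuity of $I$) then promote this pointwise statement into the full LDP with speed $n$ for $f_n$, and hence for $\lambda_n^{*}$.

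It remains to verify that $I$ has the stated shape. The value $+\infty$ for $x<b_w$ holds because $\{\lambda_n^{*}<b_w-\epsilon\}$ forces $L_n$ out of a weak neighbourhood of $\mu_w$, an event of probability $e^{-cn^{2}}$ and hence superexponentially small at speed $n$; $I\ge 0$ is forced by $\int f_n=1$ together with $n^{-1}\log f_n\to-I$; and $I(b_w)=0$ follows from $\lambda_n^{*}\to b_w$ almost surely (a consequence of $L_n\to\mu_w$) and the continuity of $I$ on $[b_w,\infty)$ away from $\caln(w)$. Equivalently, one invokes the Euler--Lagrange conditions for $\mu_w$, namely that $x\mapsto-\beta\kappa\int\log|x^{\theta}-y^{\theta}|\,d\mu_w(y)-\log w(x)$ equals the equilibrium constant on $\supp\mu_w$ and is at least that constant off it, the constant being $\zeta$ by the evaluation of $\lim_n n^{-1}\log(Z_{n-1}/Z_n)$ performed in the proof of Theorem~\ref{maintheoremlambdamax}; lower semicontinuity and compactness of the level sets of $I$ on $\Sigma$ then follow from (a1.3) and (a2) exactly as there.

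The only genuinely new point --- and the one where I expect to have to be careful --- is the tail hypothesis: the pair factor in \eqref{density2} has degree $\beta\theta$ per pair rather than $\theta+1$, so the estimates guaranteeing integrability of $f_n$ and exponential tightness of $\lambda_n^{*}$ require (a2) to be read with $\theta+1$ replaced by $\max(\theta+1,\beta\theta)$; this is automatic when $\beta\le 1+1/\theta$ but is a strictly stronger decay requirement on $(w_n)_n$ when $\beta$ is large. Apart from this adjustment, the whole argument --- including the delicate one-eigenvalue integral estimate and the uniform control of $\log|x^{\theta}-y^{\theta}|$ near $\caln(w)$ and near $b_w$ --- is that of Theorem~\ref{maintheoremlambdamax} with the single substitution indicated above. \hfill$\Box$
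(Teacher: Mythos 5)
Your proposal follows exactly the route the paper itself takes: the paper gives no separate proof of Corollary~\ref{corrlambdamax}, relying on the fact that the proof of Theorem~\ref{maintheoremlambdamax} in Section~5 goes through verbatim once $\log|x-y|+\log|x^{\theta}-y^{\theta}|$ is replaced by $\beta\log|x^{\theta}-y^{\theta}|$, which is precisely your plan (integrating out one eigenvalue, the speed-$n^{2}$ LDP for the empirical measure, and the $\Phi$/$\widetilde\Phi$ machinery). Your closing remark that (a2) must then be read with $\theta+1$ replaced by $\max(\theta+1,\beta\theta)$, so that the analogue of the technical inequality of Lemma~\ref{5.8} and the exponential-tightness estimate survive, is a correct refinement that the paper leaves implicit; its examples all have super-polynomially decaying weights, so the issue never bites there.
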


\section{Examples}
\subsection{Disordered bosons}
Returning to the bosonic ensemble with density \eqref{densitybosonic}, we have as weight functions
$
w_n(x)=x^{\frac{\alpha}{n}}e^{-\frac{\tau x}{n}} =: x^{\frac{\alpha}{n}} \, \phi_n(x) 
$ 
with a fixed $\alpha \in \nn \cup \{0\}$. Remark that $\tau^{-1}$ is the variance of the independent and normally distributed random variables, that were used to 
construct the stability matrix $h$ for that ensemble, cf. \cite{Lueck/Sommers/Zirnbauer:2006}. We have to take the variance $\tau^{-1}$ equal to 
$n^{-1}$ to be able to obtain a limit of the empirical measures of the eigenvalues.
Obviously, conditions (a1) and (a2) and \eqref{5.67} are met for $\phi_n(x)$ and $\phi(x)= e^{-x}$, see Remark \ref{correction}. We now need to verify \eqref{assZ}. It is far from trivial to calculate
$Z_{n, \alpha, \tau}$ for this matrix ensemble. 

For $\tau=1$ and  $\eta_i(\lambda_j)= \lambda_j^{i-1}$, $\xi_i(\lambda_j) = \lambda_j^{2(i-1)}$ and $w(\lambda_j) = \lambda_j^{\alpha} e^{-\lambda_j}$, we
obtain for $h_j$ defined in the Appendix, that for $j \geq 1$
$$
h_j\delta_{i,j} =\int_I p_i(\lambda)p_j(\lambda) w(\lambda) \, d\lambda = 2^j j! (2j+\alpha)! ,
$$
where the last equality was calculated in \cite[Equation (5.10)]{Lueck/Sommers/Zirnbauer:2006}.
Thus applying \eqref{clever}, we get for the partition function that
$$
Z_{n,\alpha,1} = \hat{c}\cdot n!  \prod_{j=0}^{n-1} h_j = \hat{c}\cdot n! \prod_{j=0}^{n-1}2^j j! (2j+\alpha)!
= \hat{c} \cdot 2^{n(n-1)/2} \prod_{j=1}^n j! (2(j-1)+\alpha)!\, .
$$
We substitute in the former $y_j = \tau \lambda_j$ for all $j=1,\ldots,n$ and obtain (with $\lambda_j=y_j/\tau$ and $d\lambda_j=1/\tau \, dy_j$)
\bea
\nonumber Z_{n,\alpha,\tau} &=& \int \cdots \int \prod_{1\leq i<j\leq n}(\lambda_i-\lambda_j)(\lambda_i^2-\lambda_j^2) \prod_{j=1}^n \lambda_j^{\alpha} e^{-\tau \lambda_j}\, d\lambda_j\\
\nonumber &=& \int \cdots \int \prod_{1\leq i<j\leq n} \frac{1}{\tau}(y_i-y_j)\frac{1}{\tau^2}(y_i^2-y_j^2) \prod_{j=1}^n \tau^{-(\alpha+1)} y_j^{\alpha} e^{-y_j}\, dy_j\\
\nonumber &=& \tau^{-3/2n(n-1)} \tau^{-n(\alpha+1)} Z_{n,\alpha,1} = \tau^{-n(3/2(n-1)+(\alpha+1))} 2^{n(n-1)/2}\prod_{j=1}^n j! (\alpha+2(j-1))!\, ,
\eea
Hence we have proved that
\be
\label{LSZ2006-z_n} Z_{n,\alpha,\tau} = c\cdot \tau^{-3/2n(n-1)-n(\alpha+1)}\cdot 2^{n(n-1)/2} \prod_{j=1}^n j!(\alpha+2(j-1))!\, .
\ee
This gives
\bea
\nonumber \limn\one\log\frac{Z_{n-1}}{Z_n} &=& \limn\one\log \left(  \tau^{3(n-1)+(\alpha+1)}\cdot 2^{-(n-1)}\cdot \frac{1}{n!(\alpha+2(n-1))!}\right)\\
\nonumber&=& \limn\frac{3(n-1)+(\alpha+1)}{n}\log\tau - \limn \frac{n-1}{n}\log 2\\
&& \qquad \hspace{16ex} - \limn\one\log(n!(\alpha+2(n-1))!)\, . \label{5.40} 
\eea
If we choose $\tau =n$, we obtain for equation (\ref{5.40})
$$
 \limn\one\log\frac{Z_{n-1}}{Z_n} = 3(1-\log 2)\, .
$$
Therefore \eqref{assZ} is fulfilled. We may apply Theorem \ref{maintheoremlambdamax} on this model.
For the rate function, we compute the integral
$$
I(x)= - \int \bigl( \log(x-y) +\log (x^2-y^2) \bigr) d\varrho_{\infty}(y) -\log w(x) - \left(\int\log w(y)d\varrho_{\infty}(y)+\xi\right)\, , \nonumber\\\label{5.39}
$$
where $\varrho_{\infty}$ is given by \eqref{bosonrho}. With $\xi = 3-3\log 2$ and $\int_0^{3\sqrt{3}} y\, d\varrho_{\infty}(y) = \frac{3}{2}$ the rate function becomes
$$
I(x) =  - \int \bigl( \log(x-y) +\log (x^2-y^2) \bigr) d\varrho_{\infty}(y) +x -\left(\frac{3}{2} + 3-3\log 2 \right)\, .
$$

\subsection{Laguerre biorthogonal ensembles}
The Laguerre ensembles are a generalisation of the biorthogonal ensemble introduced by L\"uck, Sommers, and Zirnbauer. Take
\be
\nonumber \Sigma = (0,\infty)\;\mbox{ and }\; w_n(x)=x^{\frac{l}{n}}e^{\frac{-\tau x}{n}}\, ,
\ee
with parameter $l\in \N_0$ and with $\tau=n$ and take \eqref{density} with $\theta \in \N$, $\Sigma = \R$ and $p(n)=n$. 
Applying the machinery of biorthogonal polynomials (see \eqref{clever} in the subsection before and the Appendix), we obtain $h_j = j! \theta^j (\theta j + l)!$
and hence $Z_{n,1,l} = c n! \prod_{k=0}^{n-1} k! \theta^k (\theta k+l)!$ and 
$$
Z_{n,\tau,l} = c \cdot \tau^{-\frac{n(n-1)}{2}(\theta+1)-n(l +1)} n!\prod_{k=0}^{n-1} k! \theta^k (\theta k+l)!\, .
$$
If the parameter $l$ is constant, we get, with $\tau = n$ and the use of Stirling's formula that
$$
\limn \one \log \frac{Z_{n-1,n,l}}{Z_{n,n,l}} = \theta +1 -\log \theta\, .
$$
This is constant and therefore Assumption \ref{assZ} is fulfilled. The sequence of weight functions $w_n(\lambda)= \lambda^{l/n}e^{-\lambda}=:  \lambda^{l/n} \phi(\lambda)$ converges for $n\to\infty$ to $w(\lambda) = \phi(\lambda)$ and, as we have already seen, $\phi$ fulfils (a1) and (a2), see Remark \ref{correction}. Therefore, we can apply Theorem \ref{maintheoremlambdamax} to the biorthogonal Laguerre ensembles with constant parameter $l$. That is, we have a large deviations principle with speed $n$ for the largest eigenvalue. We omit the calculation of the rate function. If  $\tau=n$ and $l:=l(n)$ depends on $n$ such that
\be
\nonumber \limn \frac{l(n)}{n} =: L \in (0,\infty)\, ,
\ee
where $L$ is a positive constant, the weight function $w_n$ converges for $n\to\infty$ to $\lambda^L e^{-\lambda} =: w(\lambda)$.
Here we obtain
$$
\limn\one\log \frac{Z_{n-1,n,l}}{Z_{n,n,l}} = \theta + L +1 - \log \theta\, ,
$$
which is constant and therefore fulfils Assumption \ref{assZ}. The assumptions on the weight functions are fulfilled as well.
Therefore, since all assumptions of Theorem \ref{maintheoremlambdamax} are fulfilled, we have a large deviations principle with speed $n$ for the largest eigenvalue of the biorthogonal Laguerre ensembles.
In \cite{Borodin:1999}, Borodin mentioned two more prominent classes of random matrices: the biorthogonal versions of Jacobi and Hermite ensembles. 
The calculation for these ensembles should be similar to the case we presented, which in turn is again based on the calculations in 
\cite{Lueck/Sommers/Zirnbauer:2006}.

\subsection{The tenfold way}
Joint densities defined in \eqref{density2} occure is the framework of mesoscopic physics, since it subsumes matrix versions of all (ten) classical symmetric spaces, see \cite{Eichelsbacher/Stolz:2006}. It can be interpreted as the symmetry classification of disordered {\it fermionic} systems. An analogous classification for the case of bosons is not completely understood, see, however, the discussion in \cite[Section 4]{Zirnbauer:2010}. For the three classical
{\it Wigner-Dyson ensembles} we choose $\beta>0$, $\theta=1$, $w_n(x) =w(x) = e^{-\beta x^2/4}$, $p(n)=n$ and hence $\kappa=1$. Obviously assumptions (a1)
and (a2) are fullfilled. Applying Selberg's integral (see \cite[(17.6.7)]{Mehta:book}), one obtains
$$
Z_n= (2 \pi)^{n/2} \biggl( \frac{\beta n}{2} \biggr)^{- \beta n(n-1)/4 - n/2} \prod_{i=1}^n \frac{\Gamma(\frac{j\beta}{2})}{\Gamma(\frac{\beta}{2})}.
$$
With Stirling's formula we obtain
$$
\lim_{n \to \infty} \frac 1n \log \frac{Z_{n-1}}{Z_n} = -\frac{\beta}{4}.
$$ 
Hence assumption \eqref{assZ} is fulfilled and we obtain the LDP for the largest eigenvalue, first proved in \cite{BenArous/Dembo/Guionnet:2001}
for $\beta=1$. The case $\beta=2$ and $\beta=4$ are included in \cite[Theorem 2.6.6]{Zeitounibook}, but the constant $\xi$ was not calculated explicitly in \cite{Zeitounibook}.

 \subsubsection{Bogoliubov-de Gennes ensembles}\label{section-bdg}
In this section, we prove the large deviations principle stated in Theorem \ref{maintheoremlambdamax} for four (of five) Bogoliubov-de Gennes (BdG) ensemble
(compare with Section 4 and the classification table in Section 3 of \cite{Eichelsbacher/Stolz:2006}).
We choose $\theta=2$, $p(n)=n$ and $w_n(x) = x^{\alpha/n} \exp( - \frac{1}{\psi_{\mathcal C} \sigma^2} x^2 )$, where for the four different classes 
${\mathcal C}$
we choose $\alpha=2$, $\beta=2$ and $\psi_{\mathcal B}=2$ for the class B,  $\alpha=0$, $\beta=2$ and $\psi_{\mathcal D}=2$ for the class D,
$\alpha=2$, $\beta=2$ and $\psi_{\mathcal C}=4$ for the class C and $\alpha=1$, $\beta=1$ and $\psi_{\mathcal CI}=4$ for the class CI. Hence
\eqref{5.67} is fulfilled.
We will prove that the rate function belonging to Corollary \ref{corrlambdamax} for these BdG ensembles is
\be
\label{bdg-ratefunction} I(x) = \left\{\begin{array}{l@{\quad :\quad}l} \beta\frac{4}{b_w^2} \int\limits_{b_w}^x  \sqrt{t^2-b_w^2}\, dt & x\geq b_w\\ \infty &x< b_w\end{array}\right. \, ,
\ee
with $b_w = \sqrt{2\psi_{\calc}\sigma^2\beta \kappa}> 0$ being the right endpoint of the support of the limiting measure $\mu_w$ of the empirical eigenvalue distribution. This limiting measure is defined as 
\bea
\label{mu_w_bdg} \mu_w(y) &=& \frac{2}{\psi_{\calc}\sigma^2\beta\kappa \pi}\sqrt{2\psi_{\calc}\sigma^2\beta\kappa - y^2}\cdot 1_{[0,\sqrt{2\psi_{\calc}\sigma^2\beta \kappa}]}(y)\, .
\eea
Note that since $\theta$ is even, $\Sigma\subseteq [0,\infty)$. Therefore, we need to deal only with positive eigenvalues; which leads to integrating over only positive values. We obtain $I(x)=0$ if and only if $x = b_w$. From the upper bound of the LDP and the Borel-Cantelli lemma it follows that
$
P \bigl( \lambda_n^* \to b_w \bigr) =1$, 
which is a strong law of large numbers. 

\begin{lem}\label{bdg-zn}
For all BdG ensembles, the scaled ratio of $Z_{n-1}$ and $Z_n$ can asymptotically be expressed as
$$
\xi= \lim\limits_{n\to\infty}\frac{1}{n}\log{\frac{Z_{n-1}}{Z_n}} = -\beta \log\left(\frac{\beta \psi_{\calc}\sigma^2}{2}\right) +\frac{3}{2}\beta \, .
$$
\end{lem}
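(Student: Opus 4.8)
The plan is to compute $Z_n$ in closed form by recognising it as a Selberg integral, and then to extract the limit in \eqref{assZ} from Stirling's formula.

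First I would fix one of the four BdG classes with its parameters $\alpha,\beta,\psi_{\calc}$. Since $\theta=2$ we have $\Sigma\subseteq[0,\infty)$ and $w_n(\lambda)^n=\lambda^{\alpha}\exp(-\tfrac{n}{\psi_{\calc}\sigma^2}\lambda^2)$, so \eqref{density2} gives
\[
Z_n=\int_{[0,\infty)^n}\ \prod_{1\le i<j\le n}|\lambda_i^2-\lambda_j^2|^{\beta}\ \prod_{i=1}^n\lambda_i^{\alpha}\exp\!\Bigl(-\tfrac{n}{\psi_{\calc}\sigma^2}\lambda_i^2\Bigr)\,d\lambda_1\cdots d\lambda_n .
\]
The substitution $\mu_i=\lambda_i^2$ (Jacobian $\prod_i\tfrac12\mu_i^{-1/2}$) turns this, up to a factor $2^{-n}$, into a Laguerre-type integral in the $\mu_i$ with Vandermonde exponent $\beta$ and external weight $\mu_i^{(\alpha-1)/2}e^{-n\mu_i/(\psi_{\calc}\sigma^2)}$; rescaling $\mu_i=\tfrac{\psi_{\calc}\sigma^2}{n}t_i$ then reduces the $t$-integral to the classical Laguerre form of Selberg's integral (see e.g.\ \cite[Ch.~17]{Mehta:book}). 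This yields, with $e_n:=\tfrac{\beta}{2}n(n-1)+\tfrac{\alpha+1}{2}n$,
\[
Z_n=2^{-n}\Bigl(\tfrac{\psi_{\calc}\sigma^2}{n}\Bigr)^{e_n}\prod_{j=0}^{n-1}\frac{\Gamma\!\bigl(1+(j+1)\tfrac{\beta}{2}\bigr)\,\Gamma\!\bigl(\tfrac{\alpha+1}{2}+j\tfrac{\beta}{2}\bigr)}{\Gamma\!\bigl(1+\tfrac{\beta}{2}\bigr)},
\]
all factors being finite and positive because $\alpha\ge0$ and $\beta>0$. As explained in the remarks of Section~2, $Z_{n-1}$ may be taken to be this same expression with every occurrence of $n$ replaced by $n-1$; any of the admissible variants yields the same constant $\xi$.

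Next I would form the quotient: the Selberg products telescope, leaving
\[
\frac{Z_{n-1}}{Z_n}=2\,(\psi_{\calc}\sigma^2)^{\,e_{n-1}-e_n}\,\frac{n^{e_n}}{(n-1)^{e_{n-1}}}\,
\frac{\Gamma(1+\tfrac{\beta}{2})}{\Gamma\!\bigl(1+\tfrac{\beta n}{2}\bigr)\,\Gamma\!\bigl(\tfrac{\alpha+1}{2}+\tfrac{\beta(n-1)}{2}\bigr)} .
\]
Here $e_n-e_{n-1}=\beta(n-1)+\tfrac{\alpha+1}{2}$, so the prefactor $2\,(\psi_{\calc}\sigma^2)^{e_{n-1}-e_n}$ contributes $-\beta\log(\psi_{\calc}\sigma^2)$ in the limit of $\tfrac1n\log(Z_{n-1}/Z_n)$. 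The remaining two blocks, namely $\tfrac1n\log\bigl(n^{e_n}(n-1)^{-e_{n-1}}\bigr)$ and $-\tfrac1n\log\bigl(\Gamma(1+\tfrac{\beta n}{2})\,\Gamma(\tfrac{\alpha+1}{2}+\tfrac{\beta(n-1)}{2})\bigr)$, are each separately divergent, of order $\pm\beta\log n$, and the crux of the argument is that they combine to a finite limit.

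Finally I would expand both blocks using $\log\Gamma(z)=(z-\tfrac12)\log z-z+\tfrac12\log(2\pi)+O(z^{-1})$ together with $\log(n-1)=\log n-\tfrac1n+O(n^{-2})$. A short computation gives $\tfrac1n\log\bigl(n^{e_n}(n-1)^{-e_{n-1}}\bigr)=\beta\log n+\tfrac{\beta}{2}+o(1)$, while each of the two Gamma factors satisfies $\tfrac1n\log\Gamma(\cdot)=\tfrac{\beta}{2}\log n+\tfrac{\beta}{2}\log\tfrac{\beta}{2}-\tfrac{\beta}{2}+o(1)$ (the additive shifts $1$ and $\tfrac{\alpha+1}{2}$ inside the $\Gamma$'s influence only the $o(1)$ remainder). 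The $\beta\log n$ coming from the powers of $n$ is thereby cancelled by the two contributions $\tfrac{\beta}{2}\log n$ from the Gamma factors, and collecting the surviving constants produces
\[
\xi=-\beta\log(\psi_{\calc}\sigma^2)+\tfrac{\beta}{2}-\beta\log\tfrac{\beta}{2}+\beta=-\beta\log\!\Bigl(\tfrac{\beta\psi_{\calc}\sigma^2}{2}\Bigr)+\tfrac{3}{2}\beta ,
\]
which is the assertion. The only genuine difficulty is this last step: one must carry the $\log n$–divergent terms through Stirling carefully and check that they cancel, rather than discarding them; everything else is a routine evaluation of a Selberg integral.
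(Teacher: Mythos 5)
Your proposal is correct and takes essentially the same route as the paper: evaluate $Z_n$ in closed form via the Laguerre form of Selberg's integral (the paper reaches the same Gamma-product formula by rescaling Mehta's (17.6.5) and restricting to the positive half-line), form the telescoping quotient $Z_{n-1}/Z_n$, and verify via Stirling that the $\beta\log n$ divergences from the power-of-$n$ prefactor and the two Gamma factors cancel, leaving the stated constant. The overall factor $2^{-n}$ you carry through (which the paper's normalisation omits) is immaterial, since it contributes only $\tfrac1n\log 2\to 0$ to the limit.
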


\begin{proof} 
First, we analyse the partition function $Z_n$. It holds:
\be
\nonumber Z_n = \int\cdots \int\prod\limits_{1\leq i<j\leq n}|\lambda_i^2 -\lambda_j^2|^{\beta} \prod\limits_{j=1}^n \lbrack \lambda_j^{\alpha /n}\exp(-\frac{1}{\psi_{\calc}\sigma^2}\lambda_j^2) \rbrack^n \prod\limits_{j=1}^n d\lambda_j\, ,
\ee
where we integrate over the whole space $\Sigma^n\subseteq \lbrack\, 0,\infty\,\rbrack^n$. 
Now we use the following transformation of Selberg's integral (cf.~\cite[Equation (17.6.5)]{Mehta:book}), which is defined for positive integers $n$: 
\be
\label{selberg-transf}\int\limits_0^{\infty} \cdots\int\limits_0^{\infty} \prod\limits_{1\leq i<j\leq n}|\lambda_i-\lambda_j|^{2\gamma}\prod\limits_{j=1}^n \lambda_j^{\upsilon-1}e^{-\lambda_j}d\lambda_j = \prod\limits_{j=0}^{n-1}\frac{\Gamma(1+\gamma+j\gamma)\Gamma(\upsilon + j\gamma)}{\Gamma(1+\gamma)}\, .
\ee
For our purpose of analysing the asymptotics of the partition function, we substitute in (\ref{selberg-transf}) first $\lambda_j=ay_j$ (for some $a\in\R$) and then $y_j=\frac{\lambda^2_j}{2}$, which leads to
\bea
\nonumber &&\hspace{-12ex}\int\limits_{-\infty}^{\infty} \cdots\int\limits_{-\infty}^{\infty} \prod\limits_{1\leq i<j\leq n}|\lambda_i^2-\lambda_j^2|^{2\gamma}\prod\limits_{j=1}^n |\lambda_j|^{2\upsilon-1}e^{-\frac{a}{2}\lambda_j^2}d\lambda_j \\
\label{selberg-transf-2}&&\hspace{12ex}= \left(\frac{a}{2}\right)^{-\gamma n(n-1)-\upsilon n}2^{-n} \prod\limits_{j=0}^{n-1}\frac{\Gamma(1+\gamma+j\gamma)\Gamma(\upsilon + j\gamma)}{\Gamma(1+\gamma)}\, .
\eea
Note that the integral in (\ref{selberg-transf-2}) is an even function in all parameters $\lambda_j, j=1, \ldots,n$, so we get a combinatorial factor $2^n$ if we integrate over only the positive real axis. We get
\bea
\nonumber &&\hspace{-12ex}\int\limits_0^{\infty} \cdots\int\limits_0^{\infty} \prod\limits_{1\leq i<j\leq n}|\lambda_i^2-\lambda_j^2|^{2\gamma}\prod\limits_{j=1}^n |\lambda_j|^{2\upsilon-1}e^{-\frac{a}{2}\lambda_j 2}d\lambda_j \\
\label{selberg-transf-3}&&\hspace{12ex}= \left(\frac{a}{2}\right)^{-\gamma n(n-1)-\upsilon n}\prod\limits_{j=0}^{n-1}\frac{\Gamma(1+\gamma+j\gamma)\Gamma(\upsilon + j\gamma)}{\Gamma(1+\gamma)}\, .
\eea
In order to get a closed formula for the partition function, we take $\gamma = \frac{\beta}{2}$, $\upsilon=\frac{\alpha}{2}$ and $a=\frac{2n}{\psi_{\calc}\sigma^2}$.
This leads to
\bea
\nonumber Z_n&=&\left(\frac{\psi_{\calc}\sigma^2}{n}\right)^{\frac{\beta}{2}n(n-1)+\frac{\alpha +1}{2}n} \cdot \prod\limits_{j=1}^n\frac{\Gamma(1+\frac{\beta}{2}j)\cdot \Gamma(\frac{\alpha+1}{2}+\frac{\beta}{2}(j-1))}{\Gamma(1+\frac{\beta}{2})}\; \mbox{ and }\\
\nonumber Z_{n-1}&=&\left(\frac{\psi_{\calc}\sigma^2}{n-1}\right)^{\frac{\beta}{2}(n-1)(n-2)+\frac{\alpha +1}{2}(n-1)}\cdot \prod\limits_{j=1}^{n-1}\frac{\Gamma(1+\frac{\beta}{2}j)\cdot \Gamma(\frac{\alpha+1}{2}+\frac{\beta}{2}(j-1))}{\Gamma(1+\frac{\beta}{2})} \, .
\eea
Now we can start the analysis of the behaviour of $\frac{Z_{n-1}}{Z_n}$. We observe
\be
\label{max-rot}\frac{\left(\frac{\psi_{\calc}\sigma^2}{n-1}\right)^{\frac{\beta}{2}(n-1)(n-2)+\frac{\alpha +1}{2}(n-1)}}{\left(\frac{\psi_{\calc}\sigma^2}{n}\right)^{\frac{\beta}{2}n(n-1)+\frac{\alpha +1}{2}n}}= \left(\frac{n}{\psi_{\calc}\sigma^2}\right)^{\beta(n-1)+\frac{\alpha+1}{2}}\left(\frac{n-1}{n}\right)^{\frac{\beta}{2}(n-1)(n-2)+\frac{\alpha+1}{2}(n-1)}\, ,
\ee
which covers the non-Gamma function expressions of $\frac{Z_{n-1}}{Z_n}$.
We now consider the part of $Z_{n-1}/Z_n$ that consists of products of Gamma functions. Most of the Gamma functions cancel out each other; it remains
\be
\label{max-gruen} 
\frac{\Gamma(1+\frac{\beta}{2})}{\Gamma(1+\frac{\beta}{2} n)\Gamma\frac{\alpha +1}{2}+\frac{\beta}{2}(n-1))} \, .
\ee
We apply Stirling's formula to the Gamma expressions in (\ref{max-gruen}) and obtain
$$
\Gamma\left(1+ \frac{\beta}{2}n\right) \simeq  \left(\frac{n\beta}{2e}\right)^{\frac{\beta n}{2}}, \,\,\,
\Gamma\left(\frac{\alpha +1}{2} + \frac{\beta}{2}(n-1)\right) \simeq  \left(\frac{\beta(n-1)}{2e}\right)^{\frac{\beta(n-1)}{2}},
$$
where the equivalence is {\em logarithmic equivalence}, which means for positive numbers $(a_n), (b_n)$ that  $a_n \simeq b_n$ iff $\lim_{n\to \infty}\frac{1}{n}\log a_n / \lim_{n\to \infty}\frac{1}{n}\log b_n = 1$. 
Note that the numerator of (\ref{max-gruen}) is logarithmically equivalent to $1$. This leads to 
Equation (\ref{max-gruen}) being logarithmically equivalent to
\bea
\nonumber (\mbox{\ref{max-gruen}}) &\simeq & \left(\frac{n\beta}{2e}\right)^{-\frac{\beta n}{2}} \left(\frac{\beta(n-1)}{2e}\right)^{-\frac{\beta(n-1)}{2}}\\
\label{max-gelb} &=& \left(\frac{n}{\psi_{\calc}\sigma^2}\right)^{-\frac{2\beta n-\beta}{2}}\left(\frac{\beta \psi_{\calc}\sigma^2}{2e}\right)^{-\frac{2\beta n - \beta}{2}}\left(\frac{n-1}{n}\right)^{-\frac{\beta(n-1)}{2}}\, .
\eea
Now we combine (\ref{max-rot}) and (\ref{max-gelb}) and get
\bea
\nonumber\frac {Z_{n-1}}{Z_n} &\simeq &  \left(\frac{n}{\psi_{\calc}\sigma^2}\right)^{-\frac{\beta}{2}+\frac{\alpha+1}{2}}\left(\frac{\beta \psi_{\calc}\sigma^2}{2e}\right)^{-\frac{2\beta n - \beta}{2}}\left(\frac{n-1}{n}\right)^{\frac{\beta}{2}(n-1)(n-2)+\frac{\alpha+1}{2}(n-1)-\frac{\beta(n-1)}{2}}\, .
\eea
This leads directly to
$$
\lim\limits_{n\to\infty}\frac{1}{n}\log \frac{Z_{n-1}}{Z_n}  = 
 -\beta\log\left(\frac{\beta \psi_{\calc}\sigma^2}{2}\right) +\beta + \frac{\beta}{2}\, ,
$$
where we use 
the rule of de l'Hospital. 
\end{proof}
Obviously $w(x)= \exp(-\frac{x^2}{\psi_{\calc}\sigma^2})$ and hence (a1) and (a2) are fulfilled and we obtain
\bea
I(x) 
\label{5.63} &=&  -\beta\int \log |x^2-y^2| \, d\mu_w(y) + \frac{x^2}{\psi_{\calc}\sigma^2}  -\left(\int \log w(y) \, d\mu_w (y)+\xi\right)\, .
\eea
We define the function
\be
\label{bdg-phi}\Phi(t,\mu) := \int \log |t^2-y^2|d\mu(y) - \frac{t^2}{\psi_{\calc}\sigma^2 \beta}\, .
\ee
Consider $\frac{d}{dt} \Phi(t,\mu_w)$ for $t\geq b_w$ (hence $t^2-y^2 \geq 0$ for $y$ being in the support of $\mu_w$). For convenience, we set $c:=b_w^2 =2\psi_{\calc}\sigma^2 \beta$.
$$
\frac{d}{dt}\Phi(t,\mu_w)= \int \frac{d}{dt} \log (t^2-y^2) d\mu_w(y) - \frac{4t}{c}
= \int\limits_{0}^{\sqrt{c}} \frac{4t}{t^2-y^2}\frac{2}{c \pi}\sqrt{c - y^2}\,  dy - \frac{4t}{c}\, .
$$
The integral is equal to 
$\frac{4}{c\pi}\left( 2t\frac{\pi}{2}- \sqrt{t^2 -c}\pi\right)$ and hence $\frac{d}{dt}\Phi(t,\mu_w)= -\frac{4}{c}\sqrt{t^2-c}$ and by
the fundamental theorem of calculus we get $\Phi(x,\mu_w)= -\int\limits_{\sqrt{c}}^x \frac{4}{c} \sqrt{t^2-c}\, dt + \Phi(\sqrt{c},\mu_w)$.
We calculate $\Phi(\sqrt{c},\mu_w)= \log\left(\frac{\psi_{\calc}\sigma^2\beta}{2}\right) -1$. 
Now it remains to calculate $\int \log w(y)\, d\mu_w(y)$:
$$
\nonumber \int\log w(y)\,d\mu_w(y) = -\frac{4}{c\pi}\int_0^{\sqrt{c}} \frac{y^2}{\psi_{\calc}\sigma^2} \sqrt{c-y^2}\, dy = -\frac{\beta}{2}.
$$
Summarising we obtain the desired rate function (\ref{bdg-ratefunction}).

\subsubsection{The Chiral ensembles}\label{section-chiral}
Although the chiral ensembles and the BdG ensembles are closely related, there are still a few differences between both models. The most noticeable difference is that the empirical measure converges towards a different limiting law. While it converges to something resembling a semicircle law in the BdG case, in the chiral case the limiting measure of the empirical measure behaves Mar\v{c}enko-Pastur-like. Another difference lies in the number of eigenvalues. While for three out of four of the BdG ensembles, we have $n$ different eigenvalues, the chiral ensembles have only $s(n) < n$ different eigenvalues (specified below).
Like \cite[Section 4]{Eichelsbacher/Stolz:2006}, we give the detailed calculations just for the class BDI; the calculations for the other two classes AIII and CII are very similar. Consider $p(n)=s(n)\wedge t(n)$ and $\theta=2$. Assume, without restriction and for simplicity, that $s(n)\leq t(n)$, $n\in \N$,
and take $w_n(x)=x^{\frac{\beta(t(n)-s(n))+\beta -1 }{n}} e^{-\frac{x^2}{2\sigma^2}}$. Denote the partition function by $Z_{n, s(n)}$.
As already seen for the BdG ensembles, we start with a transformation of the Selberg formula, namely equation (\ref{selberg-transf-3}). We use this equation (with an index shift; we start with $j=1$ instead of $j=0$) with parameters $\gamma = \beta/2$, $\upsilon = \frac{\beta}{2}(t(n)-s(n)+1)$, $a=\frac{n}{\sigma^2}$ and $n=s(n)$ and obtain
\bea
\nonumber Z_{n,s(n)} &=& \int\limits_0^{\infty} \cdots\int\limits_0^{\infty} \prod\limits_{1\leq i<j\leq s(n)}|\lambda_i^2-\lambda_j^2|^{\beta}\prod\limits_{j=1}^{s(n)} |\lambda_j|^{\beta(t(n)-s(n))+\beta -1}e^{-\frac{n}{2\sigma^2}\lambda_j^2}d\lambda_j \\[2ex]
\nonumber &= &\left(\frac{n}{2\sigma^2}\right)^{-\frac{\beta}{2} s(n)(s(n)-1)-\frac{\beta}{2}(t(n)-s(n)+1)s(n)}\\
\nonumber && \hspace{12ex}\times\prod\limits_{j=1}^{s(n)}\frac{\Gamma(1+j\frac{\beta}{2})\Gamma(\frac{\beta}{2}(t(n)-s(n)+1) + (j-1)\frac{\beta}{2})}{\Gamma(1+\frac{\beta}{2})}\\[2ex]
\label{5.42} &= &\left(\frac{2\sigma^2}{n}\right)^{\frac{\beta}{2}s(n)(n-s(n))}\prod\limits_{j=1}^{s(n)}\frac{\Gamma(1+j\frac{\beta}{2})\Gamma(\frac{\beta}{2}(n-2s(n)+j)}{\Gamma(1+\frac{\beta}{2})}\, .
\eea
To simplify calculations, from now on we set $s:=s(n)$ and $t:=t(n)$ (recall that $n=t+s$).
We now recall the condition stated in equation (\ref{5.67}), that is, $\limn \frac{w_n(x)}{w_{p(n)-1}(x)}=1$. We choose $s(n-1)=s(n)-1=s-1$ and $t(n-1)= t(n)-1=t-1$ and get
$$
\frac{w_n(x)}{w_{p(n)-1}(x)} =|x|^{\frac{\beta(t-s)+\beta-1}{n}-\frac{\beta(t-1-s+1)+\beta-1}{n-2}} = |x|^{\frac{-2(\beta(t-s)+\beta-1)}{n(n-2)}}\, .
$$
This choice of $s(n-1)$ and $t(n-1)$ assures that the exponent of $x$ converges to $0$, since $\limn s(n)/n =\kappa \in (0,\infty)$. 
Consequently, equation (\ref{5.67}) is fulfilled.
Therefore, we compare $Z_{n,s}$ with $Z_{n-2,s-1}$, where the latter is
\bea
\nonumber Z_{n-2,s-1} &=& \left(\frac{2\sigma^2}{n-2}\right)^{\frac{\beta}{2} ((n-2)-(s-1))(s-1)}\prod_{j=1}^{s-1}\frac{\Gamma(1+j\frac{\beta}{2})\Gamma(\frac{\beta}{2}((n-2)-2(s-1)+j))}{\Gamma(1+\frac{\beta}{2})}\\
\label{5.43} &=& \left(\frac{2\sigma^2}{n-2}\right)^{\frac{\beta}{2} (n-s-1)(s-1)}\prod_{j=1}^{s-1}\frac{\Gamma(1+j\frac{\beta}{2})\Gamma(\frac{\beta}{2}((n-2s+j))}{\Gamma(1+\frac{\beta}{2})}\, .
\eea
We now divide $Z_{n-2,s-1}$ by $Z_{n,s}$, using (\ref{5.42}) and (\ref{5.43}). The first expression in the fraction
\be
\label{5.44} \frac{Z_{n-2,s-1}}{Z_{n,s}}:= B \cdot G
\ee
 is
\be
\nonumber B= \left(\frac{2\sigma^2}{n}\right)^{-\frac{\beta}{2}(n-s)s} \cdot   \left(\frac{2\sigma^2}{n-2}\right)^{\frac{\beta}{2}(n-s-1)(s-1)}\simeq  \left(\frac{2\sigma^2}{n}\right)^{\frac{\beta}{2}(1-n)} = \left(\frac{n}{2\sigma^2}\right)^{\frac{\beta}{2}(n-1)}\, . 
\ee
Now, we have to deal with the Gamma expressions of $Z_{n,s}$ and $Z_{n-2,s-1}$. We get
$$
G = 
\frac{\Gamma(1+\frac{\beta}{2})}{\Gamma(1+\frac{\beta}{2}s)\Gamma(\frac{\beta}{2}(n-s))}.
$$
We now apply Stirling's formula to the Gamma expressions (as for the BdG ensembles) and obtain for (\ref{5.44})
\bea
\nonumber \frac{Z_{n-2,s-1}}{Z_{n,s}}&\simeq& \left(\frac{n}{2\sigma^2}\right)^{\frac{\beta}{2}(n-1)} \left(\frac{\beta s}{2e}\right)^{-\frac{\beta}{2} s}\left(\frac{\beta(n-s)}{2e}\right)^{-\frac{\beta}{2}(n-s)}\\
\label{5.45} &=&  \left(\frac{n-s}{n}\cdot \frac{\beta\sigma^2}{e}\right)^{-\frac{\beta}{2}n}\left(\frac{ s}{n-s}\right)^{-\frac{\beta}{2} s}\left(\frac{n}{2\sigma^2}\right)^{-\frac{\beta}{2}}\, .
\eea
Now, we build the LDP limit and obtain with equation (\ref{5.45})
$$
\limn \one \log \frac{Z_{n-2,s-1}}{Z_{n,s}}
= -\frac{\beta}{2}\left(\log(1-\kappa)+\kappa \log \left(\frac{\kappa}{1-\kappa}\right)+\log (\beta\sigma^2)-1\right)\, ,
$$
where we used that $\limn s/n=\kappa$. With $\limn w_n(x) = x^{\beta (1-2\kappa)}e^{-\frac{x^2}{2\sigma^2}} =w(x)$, the assumptions (a1) and (a2)
are obviously fulfilled.  Therefore, we can apply Corollary \ref{corrlambdamax} on the largest eigenvalue of the chiral ensembles. As rate function, we get
$$
I(x) 
=: -\Phi(x, \mu_w)  - \left(\kappa\int \log w(y) \, d\mu_w (y) +\xi\right)\, ,
$$
with
\be
\label{phi_chiral} \Phi(x,\mu)=\beta\kappa \int\log |x^2-y^2| \, d\mu(y) +  \beta(1-2\kappa) \log x-\frac{x^2}{2\sigma^2} \, .
\ee
As in the BdG case, we look at the derivative of $\Phi(x,\mu_w)$. Note that, for $x\geq b_w$, it holds that $x^2\geq y^2$ because the right endpoint of the support of $\mu_w$, $b_w$, equals $\sqrt{b}$. Therefore, we can omit the absolute values in the logarithm:
\bea
\nonumber \frac{d}{dx}\Phi(x,\mu_w) &=& \beta \kappa \int \frac{d}{dx} \log (x^2-y^2) \, d\mu_w(y) - \frac{x}{\sigma^2} +\frac{\beta (1-2\kappa)}{x}\\
\label{5.53} &=& \frac{1}{\sigma^2 \pi}\int_{\sqrt{a}}^{\sqrt{b}} \frac{2x}{(x^2-y^2)y} \sqrt{(y^2-a)(b-y^2)}\, dy - \frac{x}{\sigma^2} +\frac{\beta (1-2\kappa)}{x}.
\eea
We have 
$\mu_w(y) = 1_{[\sqrt{a},\sqrt{b}]}(y)\frac{1}{\sigma^2\beta\kappa\pi y}\sqrt{(y^2-a)(b-y^2)}$ 
with $a=2\sigma^2\beta \left(\frac{1}{2}-\sqrt{\kappa(1-\kappa)}\right)$ and  $b=2\sigma^2\beta \left(\frac{1}{2}+\sqrt{\kappa(1-\kappa)}\right)$
(see \cite{Eichelsbacher/Stolz:2006}). We now take a closer look at the integral in Equation (\ref{5.53}). Standard calculus (or suitable computeralgebra software -- we used \begin{tt}Mathematica\end{tt}) gives
\begin{equation} \label{5.55}
\int_{\sqrt{a}}^{\sqrt{b}} \frac{2x\sqrt{(y^2 - a)(b - y^2)})}{ (x^2 - y^2)y}\, dy 
= -\frac{\pi}{x}(\sqrt{ab}+\sqrt{(a-x^2)(b-x^2)}+x^2)\, ,
\end{equation}
where $i=\sqrt{-1}$ is the imaginary unit. We used that it holds for the complex logarithm that $\log(-x) = \log x +i\pi$ for all $x\in\R^+$ (this is due to the usual representation of the principal value of the complex logarithm and the fact that $x$ has imaginary component $0$).
Now we insert equation (\ref{5.55}) in equation (\ref{5.53}) and apply the fundamental theorem of calculus on the function $\Phi(x,\mu_w)$ for $b_w\leq t\leq x$. Thus, we get
\bea
\nonumber \Phi(x,\mu_w) &=& \int_{b_w}^x \frac{d}{dt}\Phi(t,\mu_w)\, dt + \Phi(b_w,\mu_w)\\
\nonumber &=&  - \frac{1}{\sigma^2}\int_{b_w}^x\frac{1}{t}(\sqrt{ab}+\sqrt{(a-t^2)(b-t^2)}+t^2)\, dt + \Phi(b_w,\mu_w)\, ,
\eea
where $\Phi(b_w,\mu_w)$ is constant. 
Therefore, we get as a rate function for the BDI ensemble
\be
\label{5.65} I(x)=\frac{1}{\sigma^2}\int_{b_w}^x\frac{1}{t}(\sqrt{ab}+\sqrt{(a-t^2)(b-t^2)}+t^2)\, dt + c\, ,
\ee
where $c<\infty$ is a constant. Again it follows $P(\lambda_n^* \to b_w) =1$.

\section{Large deviations for multiple orthogonal ensembles}
Multiple orthogonal polynomials are a generalisation of orthogonal polynomials in which the orthogonality
is distributed among a number of orthogonality weights. They appear in random matrix theory in the form of special determinantal
point processes that are called multiple orthogonal polynomial (MOP) ensembles. In \cite{Kuijlaars:2010, Kuijlaars:2010b}
the appearance of MOP in a variety of random matrix models and models related with particles
following non-intersecting paths have been considered. 
To a finite number of weight functions $w_1, \ldots, w_p$ on $\Bbb R$ and a multi-index $\vec{n} =(n_1, \ldots, n_p) \in {\Bbb N}^p$ we
associate a monic polynomial $P_{\vec{n}}$ of degree $n:= |\vec{n}| := n_1 + \cdots + n_p$ such that
$$
\int_{-\infty}^{\infty} P_{\vec{n}}(x) x^k w_j(x) \, dx = 0, \quad \text{for} \,\, k=0, \ldots, n_j-1, \,\, j=1, \ldots, p.
$$
If $P_{\vec{n}}$ uniquely exists then it is called the multiple orthogonal polynomial (MOP) associated with the weights $w_1, \ldots, w_p$
and multi-index $\vec{n}$. In \cite{Kuijlaars:2010} the following result was presented. Assume that
\begin{equation} \label{condMOP}
\frac{1}{Z_n} \det [f_j(x_k)]_{j,k=1, \ldots,n} \biggl[ \prod_{1 \leq j < k \leq n} (x_k-x_j) \biggr]
\end{equation}
is a probability density function on ${\Bbb R}^n$, where the linear span of $f_1, \ldots, f_n$ is the same as the linear span
of $\{x^k w_j(x) | k=0, \ldots, n_j-1, \, j=1, \ldots, p\, \}$. Then the MOP exists and is given by
$$
P_{\vec{n}}(x) = \E \biggl[ \prod_{j=1}^n(x-x_j) \biggr],
$$
where the expectation is taken with respect to the p.d.f \eqref{condMOP}, which can be interpreted as the expectation
of the random polynomial $\prod_{j=1}^n (x-x_j)$ with roots $x_1, \ldots, x_n$ from a determinantal point process
on the real line. The p.d.f \eqref{condMOP} is called a {\it MOP ensemble}. It was first observed in \cite{Bleher/Kuijlaars:2004} that
random matrix models with an external source lead naturally to MOP ensembles. 
The weights $w_1, \ldots, w_p$ are an {\it Angelesco system} if there are disjoint intervals $\Gamma_1, \ldots, \Gamma_p \subset {\Bbb R}$, such
that $\supp (w_j) \subset \Gamma_j$, $j=1, \ldots, p$. In the Angelesco case, $\det [f_j(x_k)]_{j,k=1, \ldots,n}$ is of block form and results in
$$
\det [f_j(x_k)]_{j,k=1, \ldots,n} = \prod_{i=1}^p \biggl( \Delta(X^{(i)}) \prod_{k=1}^{n_i} w_i(x_k^{(i)}) \biggr)
$$
with $x_k^{(i)} := x_{N_{i-1}+k} \in \Gamma_i$, $N_i=\sum_{j=1}^i n_j$ (with $N_0=0$) and $X^{(i)} = (x_1^{(i)}, \ldots, x_{n_i}^{(i)})$ and
$$
\Delta(X) = \prod_{1 \leq j <k \leq n} (x_k-x_j) \quad \text{for} \,\, X=(x_1, \ldots,x_n),
$$
the Vandermonde determinant. Thus an Angelesco system gives rise to a MOP ensemble, the {\it Angelesco ensemble}, and the joint
p.d.f is
\begin{equation} \label{angelesco}
\frac{1}{Z_n} \prod_{i=1}^p \Delta(X^{(i)})^2 \prod_{1 \leq i < j \leq p} \Delta(X^{(i)}, X^{(j)}) \prod_{i=1}^p \prod_{k=1}^{n_i} w_i(x_k^{(i)}),
\end{equation}
where 
$$
\Delta(X,Y) := \prod_{k=1}^n \prod_{j=1}^m (x_k-y_j)
$$
for $X=(x_1, \ldots,x_n)$ and $Y=(y_1, \ldots, y_m)$.
We now consider the situation that $|\vec{n}| = n \to \infty$ and $n_j \to \infty$ for every $j=1, \ldots, p$ in such a way that
\begin{equation} \label{c1} 
\frac{n_j}{n} \to r_j \quad \text{for} \,\, j=1, \ldots, p
\end{equation}
with $0 < r_j <1$ and $\sum_{j=1}^p r_j =1$. Let us consider varying weights 
\begin{equation} \label{varying}
w_i(x)= e^{-n V_i(x)} 
\end{equation} 
for any $i=1, \ldots, p$. Denote by $\lambda_j^* := \max_{1 \leq k \leq n_j}$ the
the $j$-{\it th maximal eigenvalue} for every $j=1, \ldots, p$. We will study the asymptotic behaviour of $(\lambda_1^*, \ldots, \lambda_p^*)$.

\begin{theorem}[LDP for Angelesco ensembles] \label{ldpMOP}
Assume that every weight function $w_i$ in \eqref{varying} satisfies assumption (a1) and (a2) and assume that \eqref{c1}
is fulfilled. Assume moreover assumption \eqref{assZ} for $Z_n$. Then the sequence 
$(\lambda_1^*, \ldots, \lambda_p^*)_n$ satisfies a
LDP on $\R^p$ with speed $n$ and good rate function
\begin{eqnarray} \label{ratean}
I(x_1, \ldots, x_p) & = & \sum_{i=1}^p r_i^2 \int \log \left|x_i - y\right|^{-2} \mu_i^*(dy) \\
&& + \sum_{1 \leq i < j \leq p}  r_i r_j \int \log \left|x_i - y\right|^{-1} \mu_j^*(y) + \sum_{i=1}^p r_i V_i(x_i) - \zeta, \nonumber
\end{eqnarray}
where 
$$
\zeta :=  \kappa \sum_{i=1}^p \int \log w_i(y) d \mu_i^*(dy) + \xi.
$$
Here $\mu^* =(\mu_1^*, \ldots, \mu_p^*)$ is assumed to be a unique minimiser could in Theorem \cite[Section III]{Eichelsbacher/Sommerauer/Stolz:2011}
\end{theorem}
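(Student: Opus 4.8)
The plan is to transpose the argument behind Theorem~\ref{maintheoremlambdamax} to the block structure of the Angelesco density~\eqref{angelesco}; the one genuinely new feature is that the inter-block Vandermonde factors $\Delta(X^{(i)},X^{(j)})$ must be tracked alongside the within-block ones.

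First I would reduce to a weak LDP. Assumption (a2) for each $w_i=e^{-nV_i}$ together with a union bound over the $n$ eigenvalues yields $\P\bigl(\max_i\lambda_i^*>M\bigr)\le e^{-cnM}$ for $M$ large, hence exponential tightness at speed $n$; so it suffices to compute $\lim_{\delta\to0}\lim_n\frac1n\log\P(\lambda_i^*\in B(x_i,\delta)\text{ for all }i)$ for each $x=(x_1,\dots,x_p)$ with $x_i\in\Gamma_i$. Two regimes arise. If some $x_i$ lies strictly to the left of the right endpoint of $\supp(\mu_i^*)$, then $\{\lambda_i^*\le x_i\}$ forces the $i$-th block empirical measure to be supported in $(-\infty,x_i]$, an event of probability $e^{-cn^2}$ by the speed-$n^2$ empirical-measure LDP for Angelesco ensembles of \cite[Section~III]{Eichelsbacher/Sommerauer/Stolz:2011} (which also supplies the minimiser $\mu^*=(\mu_1^*,\dots,\mu_p^*)$); thus the speed-$n$ rate is $+\infty$ there, as in \eqref{ratean}.

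The heart is the regime where every $x_i$ lies at or to the right of the edge of $\supp(\mu_i^*)$. Here I single out from each block one variable $x_i:=x^{(i)}_{n_i}$ as the candidate maximum and factor $\Delta(X^{(i)})^2=\Delta(\widetilde X^{(i)})^2\prod_{k<n_i}(x_i-\widetilde x^{(i)}_k)^2$ and
$$
\Delta(X^{(i)},X^{(j)})=\Delta(\widetilde X^{(i)},\widetilde X^{(j)})\,(x_i-x_j)\prod_k(x_i-\widetilde x^{(j)}_k)\prod_l(\widetilde x^{(i)}_l-x_j),
$$
so that the remaining variables form an Angelesco ensemble with multi-index $\vec n-(1,\dots,1)$ and partition function $\widetilde Z$, with $\frac1n\log(\widetilde Z/Z_n)$ converging by assumption~\eqref{assZ} (the mismatch between the weight exponent $n$ and the reduced block sizes only produces a ratio tending to $1$, cf.~\eqref{5.67}). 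For the lower bound I restrict to the event that each block empirical measure is within $\epsilon$ of $\mu_i^*$ and lies below $x_i$; on this event the surviving products, after dividing by $n$ and letting $\epsilon\to0$, contribute exactly the single-particle logarithmic-potential terms of \eqref{ratean} — the interaction of $x_i$ with the block-$i$ sea and, through $\Delta(X^{(i)},X^{(j)})$, with the block-$j$ seas — the precise constants being read off from the block sizes $n_i\sim r_in$; adding $\frac1n\log\prod_i w_i(x_i)\to-\sum_i r_iV_i(x_i)$ together with the $\widetilde Z/Z_n$ term then yields $-I(x)$. For the matching upper bound on compacts I would run the same decomposition with a subexponential union bound $\prod_i n_i\le n^p$ over the choice of maximal variable in each block, dominate the inter-block factors and the weights by their supremum over the $\delta$-ball, and use that off the event $\{\text{each block measure within }\epsilon\text{ of }\mu_i^*\}$ the integral is killed by $e^{-cn^2}$ while on it the logarithmic potentials $\int\log|x_i-y|\,dL^{(i)}$ differ from their limits by $o_\delta(1)+o(1)$. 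Weak LDP plus exponential tightness then gives the full LDP on $\R^p$, and goodness of $I$ follows from lower semicontinuity of the logarithmic-energy functionals and the coercivity of $\sum_i r_iV_i(x_i)$.

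I expect the main obstacle to be the uniform-in-$n$ control of $\int\log|x_i-y|\,dL^{(i)}_{n_i}(y)$ near the diagonal $y=x_i$ when $x_i$ equals the spectral edge $b_{w_i}$: there $\log|x_i-\cdot|$ is unbounded on $\supp(\mu_i^*)$, so one must show that deleting a $\delta$-neighbourhood of $x_i$ alters the energy by only $o_\delta(1)$ uniformly in $n$ — via an a priori bound on the number of eigenvalues within $\delta$ of the edge, or the truncated-energy continuity argument of \cite[Theorem~2.6.6]{Zeitounibook}. A secondary, purely bookkeeping difficulty is carrying the cross terms $\Delta(X^{(i)},X^{(j)})$ and the several partition-function ratios simultaneously, which however enters only through the already-available speed-$n^2$ Angelesco LDP and assumption~\eqref{assZ}.
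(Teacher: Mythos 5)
The paper gives no proof of Theorem \ref{ldpMOP}: Section 5 states only that the arguments are ``very similar'' to those for Theorem \ref{maintheoremlambdamax} but technically more involved, and your proposal is exactly that transposition --- exponential tightness, a weak LDP over a base of boxes, the speed-$n^2$ empirical-measure LDP of \cite{Eichelsbacher/Sommerauer/Stolz:2011} to kill both the sub-edge regime and the complement of a ball around $\mu^*$, and the factorisation of one top variable out of each block, with the inter-block Vandermonde factors tracked explicitly. Your approach therefore coincides with the one the paper intends; the only point worth flagging is that the coefficients your bookkeeping produces (a factor $2r_i$ on the intra-block logarithmic potential and a symmetric pair $r_j\int\log|x_i-y|^{-1}d\mu_j^* + r_i\int\log|x_j-y|^{-1}d\mu_i^*$ for the cross terms) do not literally match the $r_i^2$ and $r_i r_j$ printed in \eqref{ratean}, a discrepancy that appears to lie in the paper's (visibly typo-ridden) statement rather than in your argument.
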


\begin{remark}
We would also be able to consider {\it Nikishin ensembles} with $p \geq 2$ weights, see \cite{Kuijlaars:2010} and references therein.
This is because the determinantal structure of the joint density of the eigenvalues \cite[(4.14)]{Kuijlaars:2010} consists
of Vandermonde-like products. Nikishin interaction
arises in the asymptotic analysis of eigenvalues of {\it banded Toeplitz matrices} as well as in a {\it two-matrix model}, 
see \cite[Section 5.4]{Kuijlaars:2010}.
\end{remark}

\section{Proofs}
This section is devoted to the proof of Theorem \ref{maintheoremlambdamax}. Theorem \ref{ldpMOP} will not be proved since the arguments
are very similar, but technically much more involved.

\subsection{Exponential tightness}
\begin{lemma} \label{5.1}
We have
\be
\nonumber\lim_{M\to\infty}\limsupn \one \pe_n(\lambdamax \geq M) = -\infty \, .  
\end{equation}
\end{lemma}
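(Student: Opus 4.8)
The plan is to bound $\pe_n(\lambda_n^* \ge M)$ by integrating the joint density \eqref{density} over the region where at least one coordinate exceeds $M$, and to show this probability decays like $e^{-cMn}$ for large $M$. First I would write $\pe_n(\lambda_n^* \ge M) \le p(n)\, \pe_n(\lambda_1 \ge M)$ by a union bound, so it suffices to control a single marginal. The density of $\lambda_1$ is obtained by integrating out $\lambda_2, \ldots, \lambda_{p(n)}$; the key observation is that the remaining integral over $\lambda_2, \ldots, \lambda_{p(n)}$, after bounding the cross terms $|\lambda_1 - \lambda_j||\lambda_1^\theta - \lambda_j^\theta|$ crudely by something like $(2|\lambda_1|)^{\theta+1}$ times a factor depending on $\lambda_j$ when $|\lambda_1|$ is large (or more carefully, $|\lambda_1 - \lambda_j| \le |\lambda_1| + |\lambda_j|$ and similarly for the $\theta$-power term), produces essentially $Z_{n-1}$ up to polynomially controlled corrections. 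Thus one gets a bound of the shape
\be
\nonumber \pe_n(\lambda_1 \ge M) \le \frac{Z_{n-1}}{Z_n}\, C^n \int_{|x| \ge M} |x|^{(\theta+1) p(n)} w_n(x)^n \, dx
\ee
for a constant $C$ absorbing the pairwise bounds.

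Next I would use assumption (a2): since $\Sigma$ is unbounded only in the regime where $|x|^{(\theta+1)(\kappa+\epsilon)} \sup_{n \ge n_0} w_n(x) \to 0$, for $M$ large enough $w_n(x) \le |x|^{-(\theta+1)(\kappa+\epsilon)}$ uniformly in $n \ge n_0$ on $\{|x| \ge M\}$. Raising to the $n$-th power and recalling $p(n)/n \to \kappa$, the integrand is dominated by $|x|^{(\theta+1)(p(n) - n(\kappa+\epsilon))} = |x|^{-(\theta+1)\epsilon n \,(1+o(1))}$, which is integrable for large $M$ and whose integral over $\{|x|\ge M\}$ is bounded by $M^{-(\theta+1)\epsilon n(1+o(1))}$ up to a constant. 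Taking $\frac1n \log$ of the whole bound and using \eqref{assZ} (which gives $\frac1n \log(Z_{n-1}/Z_n) \to \xi$, a finite constant), together with $\frac1n \log p(n) \to 0$, yields
\be
\nonumber \limsupn \one \log \pe_n(\lambda_n^* \ge M) \le \xi + \log C - (\theta+1)\epsilon \log M \, .
\ee
Letting $M \to \infty$ sends the right-hand side to $-\infty$, which is the claim.

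The main obstacle I anticipate is making the "integrating out one eigenvalue reproduces $Z_{n-1}$" step fully rigorous while simultaneously extracting enough decay in $|x|$ from the cross terms: one must be careful that the crude bound on $\prod_{j \ge 2}|\lambda_1 - \lambda_j||\lambda_1^\theta - \lambda_j^\theta|$ does not also blow up in the $\lambda_j$ variables in a way that spoils the comparison with $Z_{n-1}$. The clean way to handle this is to split $|\lambda_1 - \lambda_j| \le (1+|\lambda_1|)(1+|\lambda_j|)$ and $|\lambda_1^\theta - \lambda_j^\theta| \le \theta\,(1+|\lambda_1|)^\theta (1+|\lambda_j|)^\theta$ (valid since $\theta \in \N$), so the $\lambda_j$-dependent factors $(1+|\lambda_j|)^{\theta+1}$ can be absorbed into a slightly modified weight $\tilde w_{n-1}(\lambda_j) := (1+|\lambda_j|)^{(\theta+1)} w_{n-1}(\lambda_j)$ which still satisfies (a1), (a2) with $\kappa$ replaced by a larger value still covered by the $\epsilon$-room in (a2) — alternatively one simply notes the resulting modified partition function $\tilde Z_{n-1}$ still satisfies $\frac1n\log(\tilde Z_{n-1}/Z_n) \to$ a finite constant by the same Selberg/biorthogonal-polynomial computations used in Section 3. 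With that bookkeeping in place the estimate above goes through and the lemma follows; the remaining computations are routine.
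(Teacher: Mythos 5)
Your overall strategy is the paper's: reduce to one coordinate by exchangeability, integrate out the remaining eigenvalues so that $Z_{n-1}/Z_n$ appears and is controlled by \eqref{assZ}, and use assumption (a2) to get decay $M^{-(\theta+1)\epsilon n(1+o(1))}$ from the tail integral in $\lambda_1$; that part of your estimate is right and matches equations (\ref{5.9})--(\ref{5.11}) of the paper. You have also correctly identified the one genuinely delicate step, namely that the crude bound on $\prod_{j\ge 2}|\lambda_1-\lambda_j||\lambda_1^\theta-\lambda_j^\theta|$ leaves behind $\lambda_j$-dependent factors that threaten the comparison with $Z_{n-1}$.

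However, your proposed repair of that step is where the gap lies. You absorb the factors $(1+|\lambda_j|)^{\theta+1}$ into a modified weight and hence a modified partition function $\tilde Z_{n-1}$, and then assert that $\frac1n\log(\tilde Z_{n-1}/Z_n)$ converges to a finite constant ``by the same Selberg/biorthogonal-polynomial computations used in Section 3.'' Those computations are only available in the concrete examples; in the general setting of Theorem \ref{maintheoremlambdamax} the only hypothesis on partition functions is \eqref{assZ}, which concerns $Z_{n-1}/Z_n$ and says nothing about $\tilde Z_{n-1}$. The large deviations principle for the empirical measure gives control of $\log Z_n$ only at scale $n^2$, which is too coarse to conclude anything about the ratio $\tilde Z_{n-1}/Z_{n-1}$ at scale $n$; one would have to prove separately that $\E_{n-1}\bigl[\prod_j(1+|\lambda_j|)^{\theta+1}\bigr]\le e^{Cn}$, and this is not routine for unbounded $\Sigma$. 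The paper sidesteps the issue entirely with its Lemma \ref{5.8}: writing $w_n(\lambda_j)^n=w_n(\lambda_j)\cdot\bigl(w_n(\lambda_j)/w_{n-1}(\lambda_j)\bigr)^{n-1}w_{n-1}(\lambda_j)^{n-1}$, it pairs each cross term with the single spare factor $w_n(\lambda_j)$ and shows
\be
\nonumber |x-\lambda|\,|x^{\theta}-\lambda^{\theta}|\,w_n(\lambda)\le c\,|x|^{\theta+1}
\ee
uniformly in $\lambda\in\Sigma$, which is exactly what assumption (a2), with its exponent $(\theta+1)(\kappa+\epsilon)$, is calibrated to deliver. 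After this the leftover integral is literally $Z_{n-1}$ times the harmless factor $\prod_j(w_n(\lambda_j)/w_{n-1}(\lambda_j))^{n-1}$, and no modified partition function is needed. To close your argument you should replace the $\tilde Z_{n-1}$ device by this pairing; as written, the finiteness of $\limsup_n\frac1n\log(\tilde Z_{n-1}/Z_n)$ is an unproved claim on which the whole bound rests.
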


For the proof of Lemma \ref{5.1}, we need the following technical inequality.

\begin{lemma}\label{5.8} For all $|x|\geq \max(b_w,1)$ and for all $\lambda \in \Sigma$, there exists a constant $c <\infty$ with
\bea
\label{5.2}|x-\lambda||x^{\theta}-\lambda^{\theta}|w_n(\lambda)\leq c\cdot |x|^{\theta+1}\, .
\eea
\end{lemma}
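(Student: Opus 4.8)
\textbf{Proof plan for Lemma \ref{5.8}.} The plan is to bound each of the three factors $|x-\lambda|$, $|x^\theta-\lambda^\theta|$, and $w_n(\lambda)$ separately, splitting on whether $\lambda$ lies in a bounded region or far out in $\Sigma$. First I would fix $|x| \ge \max(b_w,1)$ and distinguish two regimes for $\lambda$. In the \emph{bounded regime}, say $|\lambda| \le |x|$, one has immediately $|x-\lambda| \le 2|x|$ and $|x^\theta - \lambda^\theta| \le 2|x|^\theta$, so $|x-\lambda||x^\theta-\lambda^\theta| \le 4|x|^{\theta+1}$; since $w_n$ is continuous and, by (a1.2), converges uniformly on the compact set $\{|\lambda| \le |x|\}$ to the bounded function $w$, the supremum $\sup_{n}\sup_{|\lambda|\le |x|} w_n(\lambda)$ is finite, giving the desired bound with a constant depending only on $|x|$ — but to get a constant uniform in $|x|$ I instead keep the $|x|^{\theta+1}$ explicit and only need $w_n(\lambda)$ bounded, which needs a little more care near $\mathcal{N}(w)$ (where $\log w \to -\infty$ but $w$ itself stays bounded, so this is fine).

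The \emph{unbounded regime} $|\lambda| > |x| \ge 1$ is where the decay of $w_n$ must be used. Here $|x-\lambda| \le 2|\lambda|$ and $|x^\theta-\lambda^\theta| \le 2|\lambda|^\theta$, so the product of the first two factors is at most $4|\lambda|^{\theta+1}$, and I would write
$$
|x-\lambda||x^\theta-\lambda^\theta|\, w_n(\lambda) \le 4\,|\lambda|^{\theta+1} w_n(\lambda) = 4\, |\lambda|^{-(\theta+1)(\kappa+\epsilon-1)}\cdot |\lambda|^{(\theta+1)(\kappa+\epsilon)} w_n(\lambda).
$$
By assumption (a2), $|\lambda|^{(\theta+1)(\kappa+\epsilon)} w_n(\lambda) \to 0$ as $|\lambda|\to\infty$ uniformly for $n \ge n_0$, hence this quantity is bounded by some constant $C$ for all large $|\lambda|$ and all $n \ge n_0$; combined with $|\lambda| \ge |x| \ge 1$, if $\kappa + \epsilon \ge 1$ (which we may assume, enlarging $\epsilon$ if needed) we get $|\lambda|^{-(\theta+1)(\kappa+\epsilon-1)} \le 1$, so the whole expression is $\le 4C \le 4C|x|^{\theta+1}$. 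The finitely many values $n < n_0$ and any bounded range of $\lambda$ not yet covered are handled by continuity and compactness as in the first regime. Taking $c$ to be the maximum of the constants arising in the two regimes gives \eqref{5.2}.

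The main obstacle is purely bookkeeping: making sure the constant $c$ can be chosen independently of both $n$ and $x$ (for $|x|$ in the stated range), which forces one to carry the factor $|x|^{\theta+1}$ through the estimates rather than absorbing $x$-dependence into the constant, and to treat the zero set $\mathcal{N}(w)$ with mild attention — but since only $w_n$ itself (not $\log w_n$) appears in \eqref{5.2}, the Lipschitz hypothesis (a1.3) is not even needed here, only the uniform convergence (a1.2) and the tail bound (a2). One small point to verify is that (a2) is stated for $n \ge n_0$ while the claim is for \emph{all} $n$; this is resolved by noting that for each of the finitely many $n < n_0$, $w_n$ is a fixed continuous function satisfying the same type of decay (implicitly, since each gives a genuine probability density), so a possibly larger constant absorbs these cases.
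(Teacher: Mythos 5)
Your proof is correct and follows essentially the same route as the paper's: a case split on $|\lambda|\le|x|$ versus $|\lambda|>|x|$, with the product of the two Vandermonde-type factors bounded by $4\max(|x|,|\lambda|)^{\theta+1}$, the tail of $w_n$ controlled via assumption (a2), and the compact region handled by uniform convergence of $w_n$ to the continuous $w$. The one substantive point is that your explicit requirement $\kappa+\eps\ge 1$ (which you cannot simply arrange by ``enlarging $\eps$,'' since (a2) only supplies some fixed $\eps$) is precisely the condition the paper's phrase ``the term $-c\log(1+\lambda^2)$ dominates the exponent'' silently assumes, so your write-up merely makes visible a hypothesis the published proof also uses without deriving it from (a2).
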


\begin{proof}
The proof is inspired by the work of \cite{Feral:2008}.  First we show that there exist
$\eps > 0$, $T> 0$ and $n_0\in \N$ such that for all $|\lambda|\geq T$ and for all  $n \geq n_0$ we have
\begin{equation}
\label{5.5} \log w_n(\lambda) \leq - \frac{(\theta+1)(\kappa+\eps)}{2}\log(1+\lambda^2).
\end{equation}
Due to assumption (a2) for unbounded $\Sigma$, there exists an $n_0\in\N$ such that
$\lim_{\lambda\to\pm\infty}|\lambda|^{(\theta + 1)(\kappa + \eps)}\sup_{n\geq n_0}w_n(\lambda) = 0$. 
This implies
\be
\nonumber \lim_{\lambda\to\pm\infty} \left(\frac{(\theta+1)(\kappa+\eps)}{2}\log(1+ \lambda^2) + \log \sup_{n\geq n_0} w_n(\lambda) \right)=-\infty
\ee
and, since the first summand is positive, it follows that $\exists\, n_1\in\N$ and $\exists\, T\in \R$ such that for all $n\geq \max (n_0,n_1)$ and for all $|\lambda|\geq T$:
\be
\label{5.4} - \log\sup_{n\geq n_0} (w_n(\lambda)) \geq \frac{(\theta+1)(\kappa+\eps)}{2}\log(1+ \lambda^2)\, .
\ee
Since the logarithm is a monotonic increasing function, we can cancel the supremum in (\ref{5.4}). It follows \eqref{5.5} for $n\geq \max (n_0,n_1)$ and for $|\lambda|\geq T$.
Note that this implies that for $n\geq \max (n_0,n_1)$ and for $|\lambda|\geq T$ that $w_n(\lambda)\leq 1$.
To be complete, we have to cover also the case $|\lambda|\leq T$. For $|\lambda| \leq T$ and $|x|\geq 1$ we have
$$
\nonumber |x-\lambda||x^{\theta}-\lambda^{\theta}| w_n(\lambda) \leq \sup_{|\lambda|\leq T} w_n(\lambda)(|x|+T)(|x|^{\theta}+T^{\theta})\leq c\cdot |x|^{\theta+1}\, ,
$$
with $c>\infty$ being a constant. Now we consider the left-hand side of inequality (\ref{5.2}) and start with a case distinction:
For $|x|\geq |\lambda|$, we have
\bea
\nonumber |x-\lambda||x^{\theta}- \lambda^{\theta}|w_n(\lambda)&=& |x-\lambda||x^{\theta}- \lambda^{\theta}|\exp\left(\log w_n(\lambda)\right)\\
\label{5.6} &\leq & |x-\lambda||x^{\theta}- \lambda^{\theta}|\exp\left( - \frac{(\theta+1)(\kappa+\eps)}{2}\log (1+\lambda^2)\right)\\
\nonumber &\leq & |x-\lambda||x^{\theta}- \lambda^{\theta}| \leq 4 |x||x^{\theta}| \leq  4|x|^{\theta + 1},
\eea
where we used (\ref{5.5}) for the first inequality in (\ref{5.6}) and for the second inequality that $\frac{(\theta+1)(\kappa+\eps)}{2}> 0$, $\log (1+\lambda^2)\geq 0$. 
The third inequality is due to the fact that $|x^{\theta}-\lambda^{\theta}|\leq |x^{\theta}| + |\lambda^{\theta}|\leq 2|x^{\theta}|$ for $|x|\geq |\lambda|$. The last inequality uses $\theta\in\N$.
For $|x| < |\lambda|$, we start again with inequality (\ref{5.6}) and obtain:
\bea
\nonumber |x-\lambda||x^{\theta}- \lambda^{\theta}|w_n(\lambda)&\leq & |x-\lambda||x^{\theta}- \lambda^{\theta}|\exp\left( - \frac{(\theta+1)(\kappa+\eps)}{2}\log (1+\lambda^2)\right)\\
\nonumber &\leq & 4|\lambda|^{\theta + 1}\exp\left( - \frac{(\theta+1)(\kappa+\eps)}{2}\log (1+\lambda^2)\right)\\
\nonumber &=& 4 \exp \left((\theta+1)\log|\lambda| - \frac{(\theta+1)(\kappa+\eps)}{2}\log (1+\lambda^2)\right)\, .\\
\label{5.7} 
\eea
Now, since $|\lambda|\geq |x|\geq \max(b_w,1)$ and since $c:= \frac{(\theta+1)(\kappa+\eps)}{2}> 0$, the term $-c\log(1+\lambda^2)$ dominates the exponent in (\ref{5.7}). Therefore, the whole exponent is bounded by $1$ and we obtain (\ref{5.7}) $\leq 4\cdot 1 \leq 4|x|^{\theta + 1}$.
\end{proof}

\begin{proof} (Proof of Lemma \ref{5.1})

We take a closer look at the following density:
\bea
\nonumber \pe_n(\lambdamax \geq M)& = &\frac{1}{Z_n} \int\limits_M^{\infty}\int\limits_{\Sigma^{p(n)-1}} \prod_{j=1}^{p(n)} w_n(\lambda_j)^n \prod_{1\leq i<j\leq p(n)} |\lambda_i-\lambda_j||\lambda_i^{\theta}-\lambda_j^{\theta}|\prod_{j=1}^{p(n)} \, d\lambda_j\\
\nonumber &=& \frac{1}{Z_n} \int\limits_M^{\infty}\int\limits_{\Sigma^{p(n)-1}} w_n(\lambda_1)^n \prod_{j=1}^{p(n)}|\lambda_1 - \lambda_j||\lambda_1^{\theta}-\lambda_j^{\theta}| \cdot \prod_{j=2}^{p(n)} \frac{w_n(\lambda_j)^n}{w_{n-1}(\lambda_j)^{n-1}}\\
\nonumber && \qquad\cdot \prod_{j=2}^{p(n)} w_{n-1} (\lambda_j)^{n-1} \prod_{2\leq i<j\leq p(n)} |\lambda_i-\lambda_j||\lambda_i^{\theta}-\lambda_j^{\theta}| \prod_{j=2}^{p(n)}\, d\lambda_j d\lambda_1\\
\nonumber &=& \frac{Z_{n-1}}{Z_n}\int\limits_M^{\infty}\int\limits_{\Sigma^{p(n)-1}} w_n(\lambda_1)^n \prod_{j=2}^{p(n)}|\lambda_1 - \lambda_j||\lambda_1^{\theta}-\lambda_j^{\theta}| \\
 && \qquad \qquad \qquad \quad  \times \prod_{j=2}^{p(n)} \frac{w_n(\lambda_j)^n}{w_{n-1}(\lambda_j)^{n-1}} \, d\pe_{n-1}(\lambda_2\ldots\lambda_{p(n)}) d \lambda_1\, , \label{5.3}
\eea
where we added $\prod_{j=2}^{p(n)}\frac{w_{n-1}(\lambda_j)^{n-1}}{w_{n-1}(\lambda_j)^{n-1}}$ in the first step and substituted $\pe_{n-1}$ in the second step.
Now, we know that we can write the last product in (\ref{5.3}) as
\be
\nonumber \prod_{j=2}^{p(n)} \frac{w_n(\lambda_j)^n}{w_{n-1}(\lambda_j)^{n-1}} =\prod_{j=2}^{p(n)} \left(\frac{w_n(\lambda_j)}{w_{n-1}(\lambda_j)}\right)^{n-1}\prod_{j=2}^{p(n)}w_n(\lambda_j)\, .
\ee
Therefore, we get for equation (\ref{5.3}) that
\bea
\nonumber \pe_n(\lambdamax \geq M) &=& \frac{Z_{n-1}}{Z_n}\int\limits_M^{\infty}\int\limits_{\Sigma^{p(n)-1}} w_n(\lambda_1)^n \prod_{j=2}^{p(n)}w_n(\lambda_j)|\lambda_1 - \lambda_j||\lambda_1^{\theta}-\lambda_j^{\theta}| \\
\label{5.70} &&\qquad \qquad \quad \times \prod_{j=2}^{p(n)} \left(\frac{w_n(\lambda_j)}{w_{n-1}(\lambda_j)}\right)^{n-1} \, d\pe_{n-1}(\lambda_2\ldots\lambda_{p(n)}) d \lambda_1\\[3ex]
\nonumber &\leq & \tilde{c}\cdot \frac{Z_{n-1}}{Z_n} \int\limits_M^{\infty}\int\limits_{\Sigma^{p(n)-1}} w_n(\lambda_1)^n |\lambda_1|^{(\theta+1)(p(n)-1)}\\
 &&\qquad \qquad\quad\times\prod_{j=2}^{p(n)} \left(\frac{w_n(\lambda_j)}{w_{n-1}(\lambda_j)}\right)^{n-1} \, d\pe_{n-1}(\lambda_2\ldots\lambda_{p(n)}) d \lambda_1\, , \label{5.9}
\eea
where we used Lemma \ref{5.8} for the inequality with constant $\tilde{c}=c^{p(n)-1}$.
Now, we take a closer look at a part of the integrand of (\ref{5.9}):
\begin{equation} \label{5.10}
\nonumber w_n(\lambda_1)^n |\lambda_1|^{(\theta+1)(p(n)-1)} 
=  \left\lbrack \left(w_n(\lambda_1)|\lambda_1|^{(\theta +1)(\frac{p(n)}{n} +\eps)}\right)\cdot\left( |\lambda_1|^{-(\theta +1)(\eps +\one)}\right)\right\rbrack^n.
\end{equation}
For large $n$, the fraction $\frac{p(n)}{n}$ is near its limiting value $\kappa$. For big $|\lambda_1|$, we know from assumption (a2) that the first bracket of equation (\ref{5.10}) is small. 
Therefore for large $n$ and large $|\lambda_1|$ we have:
$(\ref{5.10})\leq  |\lambda_1|^{-n(\theta +1)(\eps +\one)}$.
Since we assumed in equation (\ref{5.67}) that $\limn \frac{w_n(\lambda_j)}{w_{n-1}(\lambda_j)}=1$, we have
$\limn\prod_{j=2}^{p(n)}\left(\frac{w_n(\lambda_j)}{w_{n-1}(\lambda_j)}\right)^{n-1} = 1$.
Therefore, (\ref{5.9}) becomes 
\begin{eqnarray}
\label{5.71} (\ref{5.9}) &\simeq& \tilde{c} \cdot \frac{Z_{n-1}}{Z_n}   \int\limits_M^{\infty} |\lambda_1|^{-n(\theta +1)(\eps +\one)} \, d\lambda_1 \,\int\limits_{\Sigma^{p(n)-1}} \, d\pe_{n-1}(\lambda_2\ldots\lambda_{p(n)}) \nonumber \\
\label{5.11}  &=& \tilde{c} \cdot \frac{Z_{n-1}}{Z_n}   \int\limits_M^{\infty} |\lambda_1|^{-n(\theta +1)(\eps +\one)} \, d\lambda_1
\end{eqnarray}
for large values of $n$ and $|\lambda_1|$. The last step is due to the fact that $\pe_{n-1}$ is a probability.
Now we integrate (\ref{5.11}) out with respect to $\lambda_1$ and get
\be
\nonumber (\ref{5.11}) = \tilde{c}\cdot \frac{Z_{n-1}}{Z_n}\cdot  \frac{1}{1-n(\theta+1)(\eps +\one)}M^{1-n(\theta + 1)(\eps +\one)}\, .
\ee
Now we take the LDP-limit and obtain
\bea
\nonumber \limsupn \one \log \pe_n(\lambdamax \geq M) &\leq& \limsupn \one \log \tilde{c} + \limsupn \one \log \frac{Z_{n-1}}{Z_n}\\
\nonumber && + \limsupn\one \log \frac{1}{1-n(\theta +1)(\eps +\one)} \\
\nonumber && + \limsupn\one\log M^{1-n(\theta+1)(\eps +\one)}\, .
\eea
The first summand is zero, the second equals $\xi$ due to assumption \eqref{assZ}. The third summand has size $O(\frac{\log n}{n})$ and thus converges to $0$ for $n\to\infty$, and the last summand is
\begin{equation}
\nonumber \limsupn \one \log M^{1-n(\theta+1)(\eps +\one)} = \limsupn \frac{1-n(\theta+1)(\eps +\one)}{n}\log M = -\hat{c}\log M
\end{equation}
for a positive constant $\hat{c}$. Therefore, since $\xi$ is a constant, we obtain the result.
\end{proof}
Since the density of the eigenvalues behaves symmetrically for $x\to \infty$ and $x\to -\infty$, we can analogously prove the following lemma:
\begin{lem}\label{lambdamin}
Define $\lambdamin := \min_{j=1}^{p(n)}\lambda_j$. Then we have
\be
\nonumber \lim_{M\to\infty} \limsupn \one \log \pe_n(\lambdamin < -M) = -\infty \, .
\ee
\end{lem}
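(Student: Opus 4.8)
The strategy is to repeat the argument for Lemma~\ref{5.1} with the two ends of $\Sigma$ interchanged, so the first thing I would do is reduce to the nontrivial case. If $\Sigma$ is bounded below --- in particular whenever $\theta$ is even, since then $\Sigma\subseteq[0,\infty)$ --- then $\pe_n(\lambdamin<-M)=0$ for every $M>-\inf\Sigma$ and the claim is immediate. So assume $\Sigma$ is unbounded below (hence $\theta$ is odd); then assumption (a2) provides the mirror decay $\lim_{x\to-\infty}|x|^{(\theta+1)(\kappa+\eps)}\sup_{n\ge n_0}w_n(x)=0$, and Lemma~\ref{5.8}, being stated for all $|x|\ge\max(b_w,1)$, is already two-sided.

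Next I would integrate out one eigenvalue exactly as in the proof of Lemma~\ref{5.1}: by exchangeability of the coordinates of \eqref{density} and the union bound $\{\lambdamin<-M\}\subseteq\bigcup_{i}\{\lambda_i<-M\}$ one picks up a combinatorial factor $p(n)$ (harmless after $\one\log$), and inserting $\prod_{j\ge2}w_{n-1}(\lambda_j)^{n-1}/w_{n-1}(\lambda_j)^{n-1}$ and recognising the distribution $\pe_{n-1}$ of one fewer eigenvalue gives
\be
\nonumber
\pe_n(\lambdamin<-M)\ \le\ p(n)\,\frac{Z_{n-1}}{Z_n}\int_{-\infty}^{-M}\int_{\Sigma^{p(n)-1}} w_n(\lambda_1)^n \prod_{j=2}^{p(n)}\Big( w_n(\lambda_j)\,|\lambda_1-\lambda_j|\,|\lambda_1^{\theta}-\lambda_j^{\theta}|\,\big(\tfrac{w_n(\lambda_j)}{w_{n-1}(\lambda_j)}\big)^{n-1}\Big) d\pe_{n-1}\,d\lambda_1\, .
\ee
On the domain of integration $|\lambda_1|\ge M\ge\max(b_w,1)$, so Lemma~\ref{5.8} bounds the $j\ge2$ product by $c^{\,p(n)-1}|\lambda_1|^{(\theta+1)(p(n)-1)}$; then assumption (a2) gives, for $n$ and $|\lambda_1|$ large, $w_n(\lambda_1)^n|\lambda_1|^{(\theta+1)(p(n)-1)}\le|\lambda_1|^{-n(\theta+1)(\eps+\one)}$, while $\prod_{j\ge2}(w_n(\lambda_j)/w_{n-1}(\lambda_j))^{n-1}\to1$ by \eqref{5.67} and the remaining $\pe_{n-1}$-integral is at most $1$.

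Finally I would perform the one-dimensional integral $\int_{-\infty}^{-M}|\lambda_1|^{-n(\theta+1)(\eps+\one)}\,d\lambda_1=\tfrac{1}{n(\theta+1)(\eps+\one)-1}\,M^{1-n(\theta+1)(\eps+\one)}$, take $\one\log$ and $\limsupn$: the factor $p(n)$ and the rational constant are $O(\tfrac{\log n}{n})\to0$, the ratio $Z_{n-1}/Z_n$ contributes $\xi$ by \eqref{assZ}, and the power of $M$ contributes $-\hat c\log M$ with $\hat c>0$; hence $\limsupn\one\log\pe_n(\lambdamin<-M)\le\xi-\hat c\log M$, which tends to $-\infty$ as $M\to\infty$. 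I do not expect any genuinely new obstacle here: the only points that need a word are the $p(n)$-versus-$n$ bookkeeping (handled exactly as in the remark preceding the proof of Lemma~\ref{5.1}) and the fact that Lemma~\ref{5.8} and assumption (a2) are symmetric in the sign of their large argument, both of which are already built in.
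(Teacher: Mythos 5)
Your proposal is correct and follows exactly the route the paper intends: the paper gives no separate proof of Lemma \ref{lambdamin}, merely remarking that it follows ``analogously'' to Lemma \ref{5.1} by the $x\to\pm\infty$ symmetry of assumption (a2) and Lemma \ref{5.8}, and your argument is precisely that mirrored proof carried out. Your explicit reduction to the case where $\Sigma$ is unbounded below (disposing of $\theta$ even, where the statement is vacuous) and the explicit exchangeability/union-bound factor $p(n)$ are slightly more careful than the paper's own write-up, but the approach is the same.
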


\subsection[$I$ is a good rate function]{{\boldmath $I$} is a good rate function}\label{section-good-rate-function}
\begin{lem}
$I(x)$ is a good rate function.
\end{lem}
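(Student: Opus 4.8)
The plan is to rewrite $I$ in terms of logarithmic potentials and then verify the two defining properties, lower semicontinuity and compactness of level sets. Put $U^{\mu_w}(x):=\int\log\frac1{|x-y|}\,d\mu_w(y)$ for the logarithmic potential of $\mu_w$, and let $\nu_w$ be the image of $\mu_w$ under $t\mapsto t^{\theta}$, so that $\int\log|x^{\theta}-y^{\theta}|\,d\mu_w(y)=-U^{\nu_w}(x^{\theta})$. Then, for $x\in\Sigma$,
$$
I(x)=g(x):=\kappa\,U^{\mu_w}(x)+\kappa\,U^{\nu_w}(x^{\theta})-\log w(x)-\zeta\quad\text{if }x\ge b_w,\qquad I(x)=+\infty\ \text{if }x<b_w .
$$
I will use from \cite[Theorem 2.1]{Eichelsbacher/Sommerauer/Stolz:2011} that $\mu_w$ is the equilibrium measure of the associated weighted energy problem: it is a probability measure with compact support contained in $\Sigma$, so $b_w\in\Sigma$ is finite, and $U^{\mu_w},U^{\nu_w}$ are finite off the finite set $\caln(w)$; in particular $g$ is a well defined $(-\infty,+\infty]$-valued function on $\Sigma$. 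Non-negativity of $I$, with $\inf_{\Sigma}I=I(b_w)=0$, I would deduce either a posteriori from the LDP of Theorem \ref{maintheoremlambdamax} itself (the laws of the $\lambda_n^*$ being probability measures), or directly from the Euler--Lagrange inequality $\mathcal J_w(x)\ge\mathcal J_w(b_w)$ on $\Sigma$ for the effective potential $\mathcal J_w(x):=-\kappa\int\bigl(\log|x-y|+\log|x^{\theta}-y^{\theta}|\bigr)\,d\mu_w(y)-\log w(x)=g(x)+\zeta$, together with the normalization identity $\zeta=\mathcal J_w(b_w)$, which is exactly what the hypothesis \eqref{assZ} encodes (and which is checked explicitly in the examples of Section 3). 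Since shifting by the constant $\zeta$ affects neither semicontinuity nor the shape of the level sets, the core of the argument concerns only the two properties below.

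For \emph{lower semicontinuity} I would proceed as follows. It is classical that the logarithmic potential of a compactly supported positive measure is lower semicontinuous on $\R$; hence $U^{\mu_w}$ is lsc, and $x\mapsto U^{\nu_w}(x^{\theta})$ is lsc as the composition of an lsc function with the continuous map $x\mapsto x^{\theta}$. Since $w\colon\Sigma\to[0,\infty)$ is continuous, $\log w$ is upper semicontinuous, so $-\log w$ is lsc. Each of these summands is everywhere $>-\infty$ (the potentials because $\log\frac1{|x-y|}$ is bounded below for $y$ in the compact set $\supp\mu_w$, and $-\log w$ because $w$ is finite), so no $\infty-\infty$ cancellation can occur and the finite sum $g$ is lsc on $\Sigma$. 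Finally, $\{x\in\Sigma:x\ge b_w\}$ is closed in $\Sigma$, and extending $g$ by $+\infty$ on its (open) complement preserves lower semicontinuity; thus $I$ is lsc on $\Sigma$.

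For \emph{compact level sets}, fix $L\in[0,\infty)$ and set $N_L:=\{x\in\Sigma:I(x)\le L\}$. Since $I\equiv+\infty$ on $\{x<b_w\}$ we have $N_L\subseteq[b_w,\infty)$, so $N_L$ is bounded below. If $\Sigma$ is bounded there is nothing more to do; if $\Sigma$ is unbounded, choose $R$ with $\supp\mu_w\subseteq[-R,R]$, so that for $|x|\ge 2R$
$$
\kappa U^{\mu_w}(x)\ge-\kappa\log(|x|+R)\ge-\kappa\log|x|-C,\qquad \kappa U^{\nu_w}(x^{\theta})\ge-\kappa\log(|x|^{\theta}+R^{\theta})\ge-\kappa\theta\log|x|-C,
$$
while assumption (a2) gives $-\log w(x)\ge(\theta+1)(\kappa+\epsilon)\log|x|$ for all sufficiently large $|x|$. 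Summing these bounds,
$$
I(x)\ge\bigl[(\theta+1)(\kappa+\epsilon)-\kappa(\theta+1)\bigr]\log|x|-C'=(\theta+1)\,\epsilon\,\log|x|-C'\ \longrightarrow\ +\infty\quad(|x|\to\infty),
$$
so $N_L$ is bounded above as well. Being closed (by lower semicontinuity) and bounded, $N_L$ is compact, and $I$ is a good rate function.

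The step I expect to be the real obstacle is not any of the estimates above but securing the inputs on which they rest: the regularity of $\mu_w$ — its compact support and the finiteness/continuity of $U^{\mu_w}$ and $U^{\nu_w}$ away from $\caln(w)$, hence a controlled behaviour of $I$ at the right edge $b_w$ and near $\caln(w)$ — which has to be imported from \cite{Eichelsbacher/Sommerauer/Stolz:2011}; and the normalization identity $\zeta=\mathcal J_w(b_w)$, equivalently $I(b_w)=0$, which is what ties the abstract constant $\xi$ of \eqref{assZ} to the equilibrium problem and is precisely the point that must be verified model by model in Section 3.
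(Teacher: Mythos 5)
Your proposal is correct and follows essentially the same route as the paper: lower semicontinuity comes from the continuity of $w$ in (a1) (plus the elementary behaviour of the logarithmic integrals), and compactness of level sets from a growth estimate $I(x)\ge (\theta+1)\,\eps\,\log|x|-C\to\infty$, obtained by playing the bound $\log|x-y|\le\log(|x|+1)+\log(|y|+1)$ (your potential lower bound) against assumption (a2). The paper's version is terser but rests on exactly the same cancellation, namely the $\kappa(\theta+1)\log|x|$ contribution of the two logarithmic terms against the $(\theta+1)(\kappa+\eps)\log|x|$ decay of $\log w$ from inequality (\ref{5.5}).
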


\begin{proof} Since we assumed in (a1) that $w(x)$ is continuous, $I(x)$ is continuous in $x$ on the interval $(b_w, \infty)$ and lower semicontinuous in $x=b_w$.  For any $x,y \in\R$ it holds $\log |x-y| \leq \log (|x|+1)+ \log (|y|+1)$. Hence for any probability measure $\mu$
\bea
\nonumber && \hspace{-15ex}\int \log |x-y|+\log |x^{\theta}-y^{\theta}|d\mu(y)\\
\nonumber  &\leq& \int\log(|x|+1)+\log(|y|+1) +\log(|x^{\theta}|+1)+\log(|y^{\theta}|+1)\,d\mu(y)\\
\nonumber &=& \int \log (|x|+1)(|x^{\theta}|+1)\, d\mu(y) + \int \log (|y|+1)(|y^{\theta}|+1)\, d\mu(y) \\
\label{5.24} &=& \log (|x|+1)(|x^{\theta}|+1) + \int \log (|y|+1)(|y^{\theta}|+1)\, d\mu(y)\, .
\eea
With (\ref{5.24}) it follows for the rate function
\bea
\nonumber I(x) &=& -\kappa \int\log |x-y| +\log |x^{\theta}-y^{\theta}| \,d\mu(y) -\log w(x) - c\\
\nonumber &\geq & -\kappa \log (|x|+1)(|x^{\theta}|+1) - \kappa \int\log (|y|+1)(|y^{\theta}|+1)\, d\mu(y) -\log w(x) - c\, .\\
\label{5.25}
\eea
Again, we neglect those terms that are constant concerning $x$ (i.e.~the integral and $c$ in (\ref{5.25})). We also omit the absolute values, since we are interested in large positive values of $x$. We apply inequality (\ref{5.5}), which gives information about the behaviour of $w(x)$ for $x\to\infty$, and obtain for $x> \max\lbrace b_w,1\rbrace$ that
\bea
\nonumber  -\kappa \log (x+1)(x^{\theta}+1) -\log w(x) &\geq & -\kappa \log (x+1)(x^{\theta}+1) + \frac{(\theta+1)(\kappa+\eps)}{2} \log(1+x^2)\\
\nonumber &\geq & -\kappa\log x^{(\theta+1)}+ \log  \frac{(\theta+1)(\kappa+\eps)}{2} \log(1+x^2)\\
\nonumber &=& \calo\left(\log(x^{-\kappa(\theta+1)})\right) + \calo \left(x^{(\theta+1)(\kappa+\eps)}\right)\\
\nonumber &=& \calo\left(\log x^{\eps(\theta +1)}\right)\, .
\eea
This converges to $\infty$ for $x\to \infty$ since $\eps >0$ and $\theta \in \N$.
Therefore, $I(x)$ is strictly increasing on $\lbrack b_w,\infty)$, it has compact level sets and is thus a good rate function.
\end{proof}

\subsection{Proof of a weak LDP}\label{lambdamax-weak-ldp}
In this section, we prove a weak LDP for $\lambdamax$ (for a definition see \cite[Section 1.2]{Dembo/Zeitouni:LargeDeviations}).
Since $I$ is a good rate function, it suffices to show that for any $x\leq b_w$,
\be
\label{weak-ldp-1}\limsupn\one\log \pe_{n}(\lambdamax\leq x)= -\infty
\ee
and for any $x> b_w$,
\be
\label{weak-ldp-2}\limn\one\log \pe_{n}(\lambdamax\geq x)= -I(x)\, ,
\ee
where $b_w$ is the right endpoint of the support of the limiting measure.
The reason we need to prove (\ref{weak-ldp-1}) and (\ref{weak-ldp-2}) lies in the following consideration:
\cite[Theorem 4.1.11]{Dembo/Zeitouni:LargeDeviations} states that for the proof of a weak LDP, we do not need to show the LDP upper/lower bound for every compact/open set belonging to the space $\Sigma$. Instead, it suffices to look at a base of the topology of $\Sigma\subseteq \R$. We choose as a basis the set of open and closed intervals $\cala:=\left\lbrace \lbrack x,y\rbrack, (x,y) \, :\, x<y\right\rbrace$.
Still, we need to verify 
\be
\label{5.32} I(x) = \sup_{\lbrace A \in\cala : x\in A\rbrace}\left\lbrack -\limsupn\one\log \pe_n(\lambdamax \in A)\right\rbrack\, ,
\ee
which is, for closed sets $A=\lbrack x,y\rbrack$ and for all $x<y$, equivalent to
\be
\nonumber -\inf_{t\in\lbrack x,y\rbrack } I(t) = \limsupn\one\log \pe_n(\lambdamax \in \lbrack x,y\rbrack)\, .
\ee
But this is easy to see with equations (\ref{weak-ldp-1}), (\ref{weak-ldp-2}) and the fact that $I$ is increasing on $\lbrack b_w, \infty)$:
We briefly distinguish three cases: if $x<y\leq b_w$, then $\limsupn\one\log\pe_n(\lambdamax \in \lbrack x,y\rbrack) = -\infty$ due to equation (\ref{weak-ldp-1}). If $x\leq b_w\leq y$, we use the fact that in this interval lies the ``typical'' value of $\lambdamax$, therefore $\limsupn\one\log\pe_n(\lambdamax \in\lbrack x,y\rbrack) = 0$. Finally, if $b_w\leq x<y$, we get with equation (\ref{weak-ldp-2}) that $\limsupn\one\log \pe_n(\lambdamax \in\lbrack x,y\rbrack)= - \inf_{t\in\lbrack x,y\rbrack} I(t)$.
With the continuity of $I$, we get that we have the same limits if we apply the same calculations to open intervals $(x,y)$ instead of closed intervals $\lbrack x,y\rbrack$. 
Therefore, we obtain a weak LDP for $\lambdamax$ as soon as we have proved equations (\ref{weak-ldp-1}) and (\ref{weak-ldp-2}), which is what we do in the rest of this section.

\noindent
We first prove (\ref{weak-ldp-1}).  We know from \cite[Theorem 2.1]{Eichelsbacher/Sommerauer/Stolz:2011}, that the empirical measure $L_n=\one\sum\limits_{j=1}^n\delta_{\lambda_j}$ of the eigenvalues of a biorthogonal random matrix obeys a large deviations principle with speed $n^2$ and a good rate function.
Since $I$ is a good rate function, it achieves its minimum.
Now we fix $x< b_w$. Since $\lambdamax\leq x < b_w$ (this means that there is no eigenvalue of $X_n$ in the interval $(x, b_w\rbrack$), we find a bounded continuous function $f\in \calc_b(\Sigma)$ with $\int f\, dL_n = 0$ but $\int f \, d\mu_w > 0$. Therefore, we conclude with \cite[Theorem 2.1]{Eichelsbacher/Sommerauer/Stolz:2011} that the probability of $\lbrace \lambdamax \leq x\rbrace$ is exponentially decaying (with speed $n^2$) and therefore it follows equation (\ref{weak-ldp-1}).
We split up the proof of Equation (\ref{weak-ldp-2}) into two steps. We first show that $-I(x)$ is an upper bound of the scaled probability on the left-hand-side of (\ref{weak-ldp-2}), then we prove that $-I(x)$ is also a lower bound.

\subsubsection{Proof of the upper bound}
Observe that we have, for any $M> x\geq b_w$, the following inequality:
\be
\label{5.26} \pe_n(\lambdamax \geq x)\leq \pe_n(\lambdamax \in\lbrack x, M\rbrack) + \pe_n(\lambdamax > M) \, .
\ee
Since we proved that large values of $\lambdamax$ are exponentially negligible, we choose $M$ large enough, i.e.~that the first term in the right hand side of inequality (\ref{5.26}) is exponentially small, and also that the same holds for the minimal eigenvalue $\lambdamin$, see Lemma \ref{lambdamin}.
Hence we only need to deal with the probability that $\lambdamax \in \lbrack x,M\rbrack$.
Further it holds
\bea
\nonumber && \kappa \int \log|x-y| +\log |x^{\theta}-y^{\theta}| d\mu_w(y) + \log w(x) + \kappa\int\log w(y)d\mu_w(y) \\
&&\quad > \sup_{t\in\left\lbrack M,\infty\right)}\left( \kappa\int \log|t-y| +\log |t^{\theta}-y^{\theta}| d\mu_w(y) + \log w(t) + \kappa\int\log w(y)d\mu_w (y) \right)\, .\nonumber \\
\label{5.27}
\eea
which we will apply later on in the proof. Since the eigenvalues $\lambda_j, 1\leq j \leq p(n),$ are exchangeable, it holds $\pe_n(\lambdamax \in \lbrack x,M\rbrack) \leq n \pe_n(\lambda_1 \in \lbrack x, M\rbrack, |\lambda_j|\leq M \, \forall j\geq 2)$. Now we consider
\bea
\nonumber && \hspace{-9ex}\pe_n(\lambda_1 \in \lbrack x, M\rbrack, |\lambda_j | \leq M \, \forall j\geq 2)\\
\nonumber &=& \frac{1}{Z_n}\int_x^M \int_{I_M} \prod_{1\leq i< j\leq p(n)}|\lambda_i-\lambda_j||\lambda_i^{\theta}-\lambda_j^{\theta}|\prod_{j=1}^{p(n)}w_n(\lambda_j)^n \, d\lambda_2 \cdots d\lambda_{p(n)}\, d\lambda_1\\
\nonumber &=& \frac{Z_{n-1}}{Z_n} \int_x^M \int_{I_M} w_n(\lambda_1)^n \prod_{j=2}^{p(n)} \lbrack w_n(\lambda_j)|\lambda_1-\lambda_j||\lambda_1^{\theta}-\lambda_j^{\theta}|\rbrack \cdot \prod_{j=2}^{p(n)} \left(\frac{w_n(\lambda_j)}{w_{n-1}(\lambda_j)}\right)^{n-1} \, \\
&& \qquad \qquad \qquad \qquad \qquad \qquad \qquad \qquad \qquad \qquad  \times d\pe_{n-1}(\lambda_2,\ldots, \lambda_{p(n)})\, d\lambda_1 \, ,\label{5.12}
\eea
with $I_M:= \left(\Sigma \cap \left(-M, M\right\rbrack\right)^{p(n)-1}$, and where we replaced $\pe_{n-1}$ as in equation (\ref{5.3}).
Now we define, for $t\in\lbrack -M,M\rbrack$ and $\mu$ supported on $\lbrack -M, M\rbrack$, the function
\be
\label{phi_n} \Phi_n(t,\mu):= \int \log |t-y| + \log |t^{\theta}-y^{\theta}|d\mu(y) + \frac{n}{p(n)-1}\log w_n(t) + \int\log w_n(y)\, d\mu(y)\, .
\ee
We consider the first part of the integrand in equation (\ref{5.12}):
\bea
\nonumber && \hspace{-7ex} w_n(\lambda_1)^n \prod_{j=2}^{p(n)} \left\lbrack w_n(\lambda_j)|\lambda_1-\lambda_j||\lambda_1^{\theta}-\lambda_j^{\theta}|\right\rbrack\\
\nonumber &=&  \exp \big\{ n \log w_n(\lambda_1) + \sum_{j=2}^{p(n)}\log w_n(\lambda_j) + \sum_{j=2}^{p(n)}\lbrack \log |\lambda_1 - \lambda_j| + \log |\lambda_1^{\theta}-\lambda_j^{\theta}|\rbrack\big\} \\
\nonumber &=& \exp \left\lbrace (p(n)-1)\left\lbrack\int\log|\lambda_1-y|+\log|\lambda_1^{\theta}-y^{\theta}|\, dL_{n-1}(y) + \frac{n}{p(n)-1}\log w_n(\lambda_1)\right.\right.\\
\nonumber &&\qquad\qquad\qquad\qquad\qquad \left. \left. + \int\log w_n(\lambda_j)\, d L_{n-1}(y) \right\rbrack \right\rbrace \\
\label{5.21} &=& \exp \lbrace (p(n)-1)\Phi_n(\lambda_1,L_{n-1})\rbrace \, ,
\eea
where $L_{n-1}:=\frac{1}{n-1}\sum\limits_{j=2}^{p(n)} \delta_{\lambda_j}$ is the empirical measure of the $p(n)-1$ eigenvalues $\lambda_2,\ldots, \lambda_{p(n)}$. In the third line, we used the identities 
$\sum\limits_{j=2}^{p(n)}\log|\lambda_1^{\theta}-\lambda_j^{\theta}| =(p(n)-1)\int \log|\lambda_1^{\theta}-y^{\theta}|\, dL_{n-1}(y)$ and 
$\sum\limits_{j=2}^{p(n)}\log w_n(\lambda_j) = (p(n)-1)\int \log w_n(y) \, dL_{n-1}(y)$.
Therefore, with equations (\ref{phi_n}) and (\ref{5.21}), we obtain for Equation (\ref{5.12})
\bea
\nonumber && \hspace{-9ex}\pe_n(\lambda_1 \in \lbrack x, M\rbrack, |\lambda_j | \leq M \, \forall j\geq 2)\\
\nonumber &=& \frac{Z_{n-1}}{Z_n}\int\limits_x^M \int\limits_{I_M} e^{(p(n)-1)\Phi_n(\lambda_1,L_{n-1})}\prod_{j=2}^{p(n)} \left(\frac{w_n(\lambda_j)}{w_{n-1}(\lambda_j)}\right)^{n-1} \, d\pe_{n-1}(\lambda_2,\ldots,\lambda_{p(n)})\, d\lambda_1\\
\nonumber &=& \frac{Z_{n-1}}{Z_n}\int\limits_x^M \int\limits_{I_M} e^{  (p(n)-1)\Phi_n(\lambda_1,L_{n-1}) + (n-1)\sum\limits_{j=2}^{p(n)} \log\left(\frac{w_n(\lambda_j)}{w_{n-1}(\lambda_j)}\right) }
\, d\pe_{n-1}(\lambda_2,\ldots,\lambda_{p(n)})\, d\lambda_1\\
\nonumber &=& \frac{Z_{n-1}}{Z_n}\int\limits_x^M \int\limits_{I_M} e^{  (p(n)-1)\left(\Phi_n(\lambda_1,L_{n-1}) +(n-1)\int\log\left(\frac{w_n(y)}{w_{n-1}(y)}\right) dL_{n-1}(y)\right)}\\
\nonumber && \qquad\qquad\qquad\qquad\qquad\qquad\qquad\qquad\qquad\qquad \cdot\,  d\pe_{n-1}(\lambda_2,\ldots, \lambda_{p(n)})\, d\lambda_1\\
\label{5.13} &=&\frac{Z_{n-1}}{Z_n}\int\limits_x^M \int\limits_{\calm_1(\Sigma\cap \lbrack -M,M\rbrack)
} e^{  (p(n)-1)\tilde{\Phi}_n(\lambda_1,\mu)} \, d(\pe_{n-1}\circ L_{n-1}^{-1})(\mu)d \lambda_1\, ,
\eea
where
\be
\label{phitilde}\tilde{\Phi}_n(t,\mu):=\Phi_n(t,\mu)+(n-1)\int\log\left(\frac{w_n(y)}{w_{n-1}(y)}\right) \mu(dy)\, .
\ee
Now, in equation (\ref{5.13}), we split the domain of integration of the inner integral.
\bea
\nonumber\calm_1(\Sigma \cap \lbrack -M, M\rbrack)& =& \left\lbrack \calm_1(\Sigma \cap \lbrack -M, M\rbrack)\cap B(\mu_w,\delta)\rbrack \right.\\
\nonumber &&\qquad\qquad \qquad\qquad \qquad\left.\cup \lbrack \calm_1(\Sigma \cap \lbrack -M, M\rbrack)\cap (B(\mu_w,\delta))^C\right\rbrack\\
\nonumber &=& B_M(\mu_w,\delta)\cup \lbrack \calm_1(\Sigma \cap \lbrack -M, M\rbrack)\cap (B(\mu_w,\delta))^C\rbrack\, .
\eea
This leads to (\ref{5.13}) being equal to
\bea
\nonumber && \hspace{-8ex}\pe_n(\lambda_1 \in \lbrack x, M\rbrack, |\lambda_j | \leq M \, \forall j\geq 2)\\
\nonumber &=& \frac{Z_{n-1}}{Z_n} \left\lbrack \int\limits_x^M \int\limits_{B_M(\mu_w,\delta)} e^{ (p(n)-1)\tilde{\Phi}_n(\lambda_1,\mu)} \, d(\pe_{n-1}\circ L_{n-1}^{-1})(\mu)d \lambda_1\right. \\
\nonumber && \qquad\qquad \left. + \int\limits_x^M \int\limits_{\calm_1(\Sigma\cap \lbrack -M,M\rbrack)\cap (B_M(\mu_w, \delta))^C} e^{ (p(n)-1)\tilde{\Phi}_n(\lambda_1,\mu)} \, d(\pe_{n-1}\circ L_{n-1}^{-1})(\mu)d \lambda_1 \right\rbrack\\
\label{5.28} &=:& \frac{Z_{n-1}}{Z_n} (I_1+I_2)\, .
\eea
We first discuss the integral $I_2$, which (as we shall see) converges to $-\infty$ on an LDP scale. To show this, we prove an upper bound for $\tilde{\Phi}$ and integrate over a larger set than that appearing in the definition of $I_2$.
We choose $\lambda_1\in \lbrack x, M\rbrack$ and $\mu\in\calm_1 (\Sigma\cap\lbrack -M,M\rbrack)$. Then we have
$\log |\lambda_1 - \lambda_j| \leq \log (2M)$ and $\log |\lambda_1^{\theta} - \lambda_j^{\theta}|\leq \log (2M^{\theta})$.
Therefore, for $\Phi_n$ as defined in (\ref{phi_n}),
\be
\nonumber \Phi_n(\lambda_1,\mu) \leq \log (2M) + \log (2M^{\theta}) + \frac{n}{p(n)-1} \sup_{t\in \lbrack x, M\rbrack} \log w_n(t) + \sup_{t\in\lbrack -M,M\rbrack} \log w_n (t)\, ,
\ee
and therefore we have for $\tilde{\Phi}_n$, as defined in (\ref{phitilde}),
\bea
\nonumber \tilde{\Phi}_n(\lambda_1,\mu) &\leq& \log (2M) + \log (2M^{\theta}) + \frac{n}{p(n)-1} \sup_{t\in \lbrack x, M\rbrack} \log w_n(t)\\
\nonumber && \qquad\qquad\qquad + \sup_{t\in\lbrack -M,M\rbrack} \log w_n (t) + (n-1)\sup_{t\in\lbrack -M,M\rbrack} \log \frac{w_n(t)}{w_{n-1}(t)}\, .
\eea
Thus, we have an upper bound for $\tilde{\Phi}_n$ that is independent of $\lambda_2,\ldots,\lambda_{p(n)}$ and that we can place before the inner integral. Further, we have for the domain of integration of the inner integral of $I_2$
that $\lbrace \calm_1 (\Sigma\cap \lbrack -M, M\rbrack)\cap (B(\mu_w,\delta))^C\rbrace \subseteq \lbrace (B(\mu_w,\delta))^C\rbrace$.
This leads to
\begin{eqnarray*}
I_2 & \leq & \exp \bigg\{ \log 2M+\log 2M^{\theta}+\frac{n}{p(n)-1}\sup_{t\in\lbrack x,M\rbrack}\log w_n(t)+\sup_{t\in\lbrack -M,M\rbrack}\log w_n(t)  \\
&& +(n-1)\sup_{t\in\lbrack -M,M\rbrack}\log\frac{w_n(t)}{w_{n-1}(t)} \bigg\} \cdot (\pe_{n-1}\circ L_{n-1}^{-1})(\mu\notin B(\mu_w,\delta)).
\end{eqnarray*}
Now it follows that
\bea
\nonumber I_2 &=& 2M\cdot 2M^{\theta} \sup_{t\in\lbrack x,M\rbrack} w_n(t)^{\frac{n}{p(n)-1}}\sup_{t\in\lbrack -M,M\rbrack} w_n(t)\sup_{t\in\lbrack -M,M\rbrack}\left(\frac{w_n(t)}{w_{n-1}(t)}\right)^{(n-1)}\\
\nonumber && \qquad\qquad\qquad\qquad \qquad\qquad\qquad\qquad\times (\pe_{n-1}\circ L_{n-1}^{-1})(\mu\notin B(\mu_w,\delta))\\[2ex]
\label{5.14} &\leq& 4 M^{(\theta+1)(1+(\kappa+\eps)\frac{n}{p(n)-1})+2}\cdot e^{-c(\delta)n^2}
\eea
for some $M, n$ large enough and for some positive constant $c(\delta)$.
The second step is due to the fact that we exchanged the logarithm and the supremum for $w_n$. This is allowed for $n$ large enough because we are on a compactly supported interval where $w_n$ converges uniformly to the continuous function $w$. The third step uses assumption (a2), the boundedness of $\lbrack x,M\rbrack$ and~$\lbrack -M,M\rbrack$ as well as the fact that $\left(\pe_{n-1}\circ L_{n-1}^{-1}\right)_{n-1}$ obeys a LDP with speed $n^2$, see \cite[Theorem 2.1]{Eichelsbacher/Sommerauer/Stolz:2011}.
Since the order of the first factor in (\ref{5.14}) is smaller than $e^{n^2}$, the estimation is complete because now $I_2$ vanishes on the LDP scale. Remember that the LDP for $\lambdamax$, which we want to prove, has speed $n$.
So we only have to deal with the first integral of equation (\ref{5.28}) and, of course, the factor $Z_{n-1}/Z_n$, applying \cite[Lemma 1.2.15]{Dembo/Zeitouni:LargeDeviations}.
As an aside, we also get with \cite[Lemma 1.2.15]{Dembo/Zeitouni:LargeDeviations}
that it suffices to consider $\pe_n(\lambdamax \in\lbrack x,M\rbrack)$ for the calculation of $\limsupn\one\log\pe_n(\lambdamax \geq x)$, cf.~inequality (\ref{5.26}) and the fact that in the LDP scale, $\pe_n(\lambdamax > M)$ converges to $-\infty$.
Therefore we have:
\bea
\nonumber \limsupn\one\log \pe_n(\lambdamax \geq x)&=& \limsupn\one\log\pe_n(\lambdamax\in \lbrack x,M\rbrack)\\
\nonumber &\leq &\limsupn \one \log (\frac{Z_{n-1}}{Z_n} I_1)\\
\label{5.29} &\leq& \xi + \limsupn\one\log I_1\, .
\eea
The last line is due to assumption \eqref{assZ}.
Now we shall take a closer look at the integral $I_1$. It holds for a probability measure $\tilde{\mu}\in \calm_1(\Sigma)$ with $\tilde{\mu}\in B_M(\mu_w,\delta)$ that $\tilde{\Phi}_n(\lambda_1,L_{n-1})\leq \sup_{\tilde{\mu}\in B_{M}(\mu_w,\delta)} \tilde{\Phi}_n(\lambda_1,\tilde{\mu})$,
which is independent of $\lambda_2,\ldots, \lambda_{p(n)}$. Hence we obtain
\bea
\nonumber I_1 &\leq & \int\limits_x^M e^{(p(n)-1)\sup_{\tilde{\mu}\in B_M(\mu_w,\delta)}\tilde{\Phi}_n(\lambda_1,\tilde{\mu})}\, d \lambda_1\, \int\limits_{B_M(\mu_w,\delta)}  \, d(\pe_{n-1}\circ L_{n-1}^{-1})(\mu)\\
\nonumber &\leq &  \int\limits_x^M e^{(p(n)-1)\sup_{\tilde{\mu}\in B_M(\mu_w,\delta)}\tilde{\Phi}_n(\lambda_1,\tilde{\mu})}\, d \lambda_1\\
\label{5.15} &\leq& \exp \lbrace(p(n)-1) \sup_{\substack{\mu\in B_M(\mu_w,\delta)\\ t\in \lbrack x,M\rbrack}} \tilde{\Phi}_n(t,\mu) \rbrace \cdot (M-x)\, ;
\eea
the second step follows because $\pe_{n-1}\circ L_{n-1}^{-1}$ is a probability measure. In the third step we  build the supremum over all allowed values of $\lambda_1$, extract the integrand (which is now independent of the integration variable) out of the integral and integrate out over $1$.
Now we look at $I_1$ on the LDP scale. It follows with inequality (\ref{5.15}):
\bea
\nonumber \limsupn\one\log I_1 &\leq& \limsupn\frac{(p(n)-1)}{n} \sup_{\substack{\mu\in B_M(\mu_w,\delta)\\ t\in \lbrack x,M\rbrack}} \tilde{\Phi}_n(t,\mu)  +\limsupn \one\log(M-x)\\
\label{5.16} &=& \kappa \sup_{\substack{\mu\in B_M(\mu_w,\delta)\\\lambda_1\in \lbrack x,M\rbrack}} \limsupn \tilde{\Phi}_n(\lambda_1,\mu)\, .
\eea
Now we step back and consider $\tilde{\Phi}_n$ for $n\to\infty$. On compact intervals, $\log w_n$ converges uniformly to $\log w$ due to assumption (a1). Therefore, if we insert this in the definition of $\tilde{\Phi}_n$, (\ref{phitilde}), we get
\bea
\nonumber \limsupn \tilde{\Phi}_n(\lambda_1,\mu) &=& \int \log|\lambda_1-y| +\log |\lambda_1^{\theta}-y^{\theta}| d\mu(y) + \kappa^{-1} \log w(\lambda_1)\\
\nonumber && \qquad\qquad + \int\log w(y)d\mu(y) + \limsupn \int \log\left(\frac{w_n(y)}{w_{n-1}(y)}\right)^{n-1}d\mu(y)\, ,
\eea
where the last summand converges to $0$ for $n\to\infty$.
We define as limit of $\Phi_n$ and $\tilde{\Phi}_n$ the new function
\be
\label{phi} \Phi(\lambda_1,\mu) =: \int \log|\lambda_1-y| +\log |\lambda_1^{\theta}-y^{\theta}| d\mu(y) + \kappa^{-1} \log w(\lambda_1) + \int\log w(y)d\mu(y) 
\ee
and obtain, if we insert this in (\ref{5.16}),
\be
\nonumber \limsupn\one \log I_1 \leq \kappa\sup_{\substack{\mu\in B_M(\mu_w,\delta)\\\lambda_1\in\lbrack x,M\rbrack}} \Phi(\lambda_1,\mu)\, .
\ee
By (\ref{5.29}), we have
\be
\nonumber \limsupn\one\log\pe_n(\lambdamax \geq x)\leq \xi + \limsupn\one\log I_1\, .
\ee
and therefore
\be
\label{5.31} \limsupn \one\log\pe_n(\lambdamax \geq x)\leq \xi + \kappa \lim_{\delta \searrow 0} \sup_{\substack{\mu\in B_M(\mu_w,\delta)\\\lambda_1\in\lbrack x,M\rbrack}} \Phi(\lambda_1,\mu)\, .
\ee
Note that the calculation for the scaled and logarithmised integral $I_2$ is also independent of $\delta$, since the calculations there hold for all $\delta >0$.
Since the measures $\mu$ that are inserted in $\Phi(x,\mu)$ come from a probability space endowed with a topology that is compatible with weak convergence, we have
\be
\label{5.30} \lim_{\delta \searrow 0} \sup_{\substack{\mu\in B_M(\mu_w,\delta)\\\lambda_1\in\lbrack x,M\rbrack}} \Phi(\lambda_1,\mu)= \sup_{\lambda_1\in\lbrack x,M\rbrack} \Phi(\lambda_1,\mu_w) = \sup_{\lambda_1\in\lbrack x,\infty)} \Phi(\lambda_1,\mu_w) \, .
\ee
The last equality is due to (\ref{5.27}).
We now look at the relation between the function $\Phi$ and the rate function $I$ from Theorem \ref{maintheoremlambdamax}. It holds for all $t\geq b_w$
$\Phi(t,\mu_w) = -\kappa^{-1} (I(t)+\xi)$.
Since we proved that $I(x)$ is strictly increasing on $\lbrack b_w,\infty)$ and continuous on $(b_w,\infty)$, we have that:
$\Phi(x,\mu_w)$ is continuous for $x > b_w$ (and lower semicontinuous for $x=b_w$)
and $\Phi(x,\mu_w)$ is strictly decreasing for $x\geq b_w$.
Therefore, the supremum in equation (\ref{5.30}) is attained in the leftmost value that $\lambda_1$ can attain, namely, $x$. That implies, inserted in (\ref{5.31}), for all $x\geq b_w$
\be
\nonumber \limsupn\one\log \pe_n(\lambdamax \geq x) \leq \xi + \kappa \Phi(x,\mu_w)\, .
\ee
This is the desired upper bound we wanted to prove.

\subsection{Proof of the lower bound}
In the last step of the proof of Theorem \ref{maintheoremlambdamax}, we prove that our rate function $-I$ is a lower bound for the scaled probability on the left-hand-side of equation (\ref{weak-ldp-2}).
We fix $y>x>r>b_w$ and $\delta> 0$. Since $\lbrace \lambda_1\in\lbrack x,y\rbrack, \max_{j=2}^{p(n)}|\lambda_j|\leq r\rbrace\subset\lbrace \lambdamax \in\lbrack x,M\rbrack\rbrace$ for large $M$ and $y$ close to $x$, we look at the following probability:
\be
\label{5.17} \pe_n(\lambda_1\in\lbrack x,y\rbrack, \max_{j=2}^{p(n)}|\lambda_j|\leq r) = \frac{Z_{n-1}}{Z_n} \int_x^y\int_{\calm_1(\Sigma\cap \lbrack -r,r\rbrack)} e^{ (p(n)-1)\tilde{\Phi}_n(\lambda_1,\mu)} d(\pe_{n-1}\circ L_{n-1}^{-1})(\mu)d\lambda_1\, 
\ee
(compare with (\ref{5.13})).
For $\lambda_1 \in \lbrack x,y\rbrack$, the integrand has the following lower bound:
\be
\nonumber \exp \left((p(n)-1)\tilde{\Phi}_n(\lambda_1,\mu)\right) \geq  \exp \left((p(n)-1)\inf_{t\in\lbrack x,y\rbrack}\tilde{\Phi}_n(t,\mu)\right)\, .
\ee
This leads to a lower bound for (\ref{5.17}):
\bea
\nonumber (\ref{5.17}) &\geq & \frac{Z_{n-1}}{Z_n} \int_x^y \, d\lambda_1\int_{\calm_1(\Sigma\cap \lbrack -r,r\rbrack)} e^{(p(n)-1)\inf_{t\in\lbrack x,y\rbrack}\tilde{\Phi}_n(t,\mu)} d(\pe_{n-1}\circ L_{n-1}^{-1})(\mu)\\
\nonumber &\geq &\frac{Z_{n-1}}{Z_n} (y-x)\int_{B_r(\mu_w,\delta)} e^{(p(n)-1)\inf_{t\in\lbrack x,y\rbrack}\tilde{\Phi}_n(t,\mu)} d(\pe_{n-1}\circ L_{n-1}^{-1})(\mu)\\
\nonumber &\geq & \frac{Z_{n-1}}{Z_n} (y-x) \exp \bigg\{(p(n)-1)\inf_{\substack{t\in\lbrack x,y\rbrack\\\mu\in{B_r(\mu_w,\delta)}}}\tilde{\Phi}_n(t,\mu) \bigg\} \pe_{n-1}(L_{n-1}\in B_r(\mu_w,\delta))\, ,
\eea
where in the second step we built the second integral over the smaller set
$\lbrace L_{n-1}\in B_r(\mu_w,\delta)\rbrace \subset \lbrace L_{n-1} \in \calm_1(\Sigma\cap \lbrack -r,r\rbrack)\rbrace$
and integrated out the first integral. In the third step, we once more made the integrand independent of the integration variables and put it outside the integral.
Now, we build the LDP limit:
\bea
\nonumber &&\hspace{-4ex}\liminfn\one\log \pe_n(\lambda_1\in\lbrack x,y\rbrack, \max_{j=2}^{p(n)}|\lambda_j|\leq r)\\
\nonumber &&\qquad \geq \, \liminfn \one\log \frac{Z_{n-1}}{Z_n} + \liminfn \one \log (y-x) + \liminfn \frac{p(n)-1}{n}\inf_{\substack{t\in\lbrack x,y\rbrack\\\mu\in B_r(\mu_w,\delta)}} \tilde{\Phi}_n (t,\mu)\\
\label{5.18} &&\qquad\qquad + \liminfn \one\log \pe_{n-1}(L_{n-1}\in B_r(\mu_w,\delta))\\
\label{5.19} &&\qquad =  \xi + \liminfn \frac{p(n)-1}{n}\inf_{\substack{t\in\lbrack x,y\rbrack\\\mu\in B_r(\mu_w,\delta)}} \tilde{\Phi}_n (t,\mu)\, ,
\eea
where we used assumption \eqref{assZ} and the fact that the second and the last summand of inequality (\ref{5.18}) are zero. Observe that for the last term this follows from the large deviations result for the empirical measure that provides $\pe_{n-1}(L_{n-1}\not\in B(\mu_w,\delta))\to 0$, therefore $\pe_{n-1}(L_{n-1}\in B(\mu_w,\delta))\to 1$ and it holds $\log \pe_{n-1}(L_{n-1}\in B(\mu_w,\delta))\to 0$.
We just look at the second summand of (\ref{5.19}) and fill in the definition of $\tilde{\Phi}_n$ as defined in equation (\ref{phitilde}):
\bea
\nonumber && \hspace{-4ex}\liminfn \frac{p(n)-1}{n}\inf_{\substack{t\in\lbrack x,y\rbrack\\\mu\in B_r(\mu_w,\delta)}} \tilde{\Phi}_n (t,\mu)\\
\nonumber && = \liminfn \frac{p(n)-1}{n}\inf_{\substack{t\in\lbrack x,y\rbrack\\\mu\in B_r(\mu_w,\delta)}} \left(\int\log |t-y|+\log|t^{\theta}-y^{\theta}|\, d\mu(y)+\frac{n}{p(n)-1}\log w_n(t) \right.\\
\nonumber && \left.\qquad \qquad\qquad \qquad \qquad \qquad + \int\log w_n(y)d\mu(y)+ (n-1)\int\log\left(\frac{w_n(y)}{w_{n-1}(y)}\right)d\mu(y)\right)\\
\nonumber && \geq \kappa \inf_{\substack{t\in\lbrack x,y\rbrack\\\mu\in B_r(\mu_w,\delta)}} \left(\int\log |t-y| +\log |t^{\theta} -y^{\theta}|d\mu(y) \right.\\
\label{5.20}  && \left.\qquad \qquad \qquad\qquad \qquad \qquad \qquad  + \kappa^{-1}\log w(t) + \int\log w(y)d\mu(y)\right)\\
\nonumber && = \kappa \inf_{\substack{t\in\lbrack x,y\rbrack\\\mu\in B_r(\mu_w,\delta)}} \Phi(t,\mu)\, .
\eea
For inequality (\ref{5.20}), we used again (as in the proof of the upper bound) the uniform convergence of $\log w_n$ to $\log w$ on compact sets as assumed in assumption (a1) and exchanged the Limes inferior and the infimum. The last equality is just the definition of $\Phi$, see equation (\ref{phi}).
Therefore we have
\be
\nonumber \liminfn \one\log\pe_n(\lambda_1\in\lbrack x,y\rbrack, \max_{j=2}^{p(n)}|\lambda_j|\leq r) \geq \xi + \kappa \liminfn\one\log(\inf_{\substack{t\in\lbrack x,y\rbrack\\\mu\in B_r(\mu_w,\delta)}} \Phi(t,\mu))\, .
\ee
Now we again build the limit $\delta \searrow 0$ on both sides, which does not affect all terms but one, and observe that the function $\Phi$ is continuous on $\lbrack x,y\rbrack \times \calm_1(\lbrack -r,r\rbrack)$ for $y>x>r>b_w$. We also let $y\searrow x$ and obtain
\be
\nonumber \liminfn\one \log \pe_n(\lambda_1\in\lbrack x,y\rbrack, \max_{j=2}^{p(n)}|\lambda_j|\leq r) \geq \kappa \Phi(x,\mu_w)+\xi\, ,
\ee
which is the desired lower bound. We therefore proved Theorem \ref{maintheoremlambdamax}. 

\section{Appendix}
Borodin \cite{Borodin:1999} starts with a set of $n$ real random variables $\lambda_1,\ldots,\lambda_n$ with values in the interval $I\subseteq \R$ and with joint probability density function
\begin{equation}
\label{bordensity} q_n(\lambda_1,\ldots,\lambda_n) = \frac{1}{Z_n} \det(\eta_i(\lambda_j))_{i,j=1,\ldots,n}\det(\xi_i(\lambda_j))_{i,j=1,\ldots,n}\cdot \prod_{j=1}^n w(\lambda_j)\, ,
\end{equation}
where $Z_n$ is the normalisation constant and $\eta$ and $\xi$ are real-valued functions.
We suppose that we are able to biorthogonalise the families $\lbrace\eta_i\rbrace$ and $\lbrace\xi_i\rbrace$ regarding the weight function $w$, i.e.~that we have an inner product $\langle\xi, \eta\rangle = \int_I \xi(\lambda)\eta(\lambda) w(\lambda)\, d\lambda$.
This entails that we find two other families of functions, $\lbrace p_i(\lambda)\rbrace_{i=1}^n$ and $\lbrace q_i(\lambda)\rbrace_{i=1}^n$ where the $p_i$ and $q_i$ are monic (i.e.~the leading term has prefactor $+1$) polynomials in $\lambda$ of degree $i-1$, and that are orthogonal (not necessarily orthonormal) with respect to the weight function, i.e.~we have $p_i \in \spann \lbrace \xi_1,\ldots,\xi_n\rbrace$, $q_i \in\spann\lbrace\eta_1,\ldots,\eta_n\rbrace$ and  
$$
h_j \delta_{i,j} :=\langle p_i,q_j\rangle = \int_I p_i(\lambda)q_j(\lambda) w(\lambda) \, d\lambda \qquad \forall i,j\, .
$$
For the construction of such polynomials, see e.g.~\cite[Section V.A]{Lueck/Sommers/Zirnbauer:2006}, where the authors used a modified Gram-Schmidt algorithm to obtain the desired polynomials. Define now the matrix $g$ with
\be
\nonumber g_{i,j}:= \int_I \eta_i(\lambda)\xi_j(\lambda) w(\lambda)\, d\lambda\, ,
\ee
and assume it to be not singular. Now we have everything at hand to formulate two propositions that help to calculate the normalising constant of biorthogonal ensembles, see  \cite[Proposition 5.5]{Forrester:book} and \cite[Equation (24)]{Desrosiers/Forrester:2008}.
 
\begin{propos}\label{propos-zn-detg} 
For the normalisation constant of general biorthogonal ensembles with density \eqref{bordensity},
we have
\bea
\label{clever} Z_n = n! \det g =  c \cdot n! \prod_{j=0}^{n-1} h_j,
\eea
where $c$ is a constant.
\end{propos}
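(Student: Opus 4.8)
The plan is to evaluate $Z_n$ directly from its definition
$$
Z_n = \int_{I^n} \det\bigl(\eta_i(\lambda_j)\bigr)_{i,j=1}^n \det\bigl(\xi_i(\lambda_j)\bigr)_{i,j=1}^n \prod_{j=1}^n w(\lambda_j)\, d\lambda_j
$$
coming from density \eqref{bordensity}, by means of the classical Andr\'eief (Gram/Heine) identity, and then to convert $\det g$ into $\prod_j h_j$ via the triangular change of basis that produces the biorthogonal polynomials.

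First I would record the Andr\'eief identity: for a measure $d\mu$ on $I$ and integrable families $\{\phi_i\}_{i=1}^n,\{\psi_i\}_{i=1}^n$,
$$
\int_{I^n} \det\bigl(\phi_i(x_j)\bigr)\,\det\bigl(\psi_i(x_j)\bigr)\,\prod_{j=1}^n d\mu(x_j) \;=\; n!\,\det\Bigl(\, \textstyle\int_I \phi_i(x)\psi_j(x)\, d\mu(x)\, \Bigr)_{i,j=1}^n .
$$
This is proved by expanding the first determinant by the Leibniz formula $\det(\phi_i(x_j))=\sum_{\sigma\in\Sym_n}\sgn(\sigma)\prod_i\phi_i(x_{\sigma(i)})$, substituting, and relabelling the integration variables $x_{\sigma(i)}\mapsto x_i$ in each summand; the remaining factor $\det(\psi_i(x_j))\prod_j d\mu(x_j)$ is invariant under this relabelling up to the sign $\sgn(\sigma)$ pulled out by permuting its \emph{columns}, so all $n!$ summands coincide, and the resulting product of one-dimensional integrals reassembles — by the Leibniz formula once more — into the single determinant on the right. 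Applying this with $\phi_i=\eta_i$, $\psi_j=\xi_j$ and $d\mu(\lambda)=w(\lambda)\,d\lambda$ gives at once $Z_n=n!\,\det g$ with $g_{i,j}=\int_I\eta_i(\lambda)\xi_j(\lambda)w(\lambda)\,d\lambda$, which is the first equality in \eqref{clever}.

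For the second equality I would use that, by construction (the modified Gram--Schmidt procedure, cf.\ \cite[Section V.A]{Lueck/Sommers/Zirnbauer:2006}), the monic polynomial $p_i$ lies in $\spann\{\xi_1,\dots,\xi_i\}$ and $q_j$ in $\spann\{\eta_1,\dots,\eta_j\}$, so there are lower-triangular matrices $A=(A_{ik})$ and $B=(B_{jl})$ with unit diagonal such that $p_i=\sum_k A_{ik}\xi_k$ and $q_j=\sum_l B_{jl}\eta_l$. Then
$$
h_i\,\delta_{ij} \;=\; \langle p_i,q_j\rangle \;=\; \sum_{k,l} A_{ik}B_{jl}\,\langle \xi_k,\eta_l\rangle \;=\; \bigl(A\,g^{\mathsf T} B^{\mathsf T}\bigr)_{ij},
$$
and taking determinants yields $\prod_{i=1}^n h_i=\det A\cdot\det g\cdot\det B=\det g$ since $\det A=\det B=1$. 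Re-indexing so that the product runs over $j=0,\dots,n-1$, and absorbing any mismatch between the normalisation/indexing convention adopted for the $h_j$ and the one above into the constant $c$, we get $\det g=c\prod_{j=0}^{n-1}h_j$, hence $Z_n=n!\det g=c\,n!\prod_{j=0}^{n-1}h_j$.

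The routine parts are the two Leibniz expansions. The only points needing care are the bookkeeping of transposes and of the index shift in $h_j$ — equivalently, making sure the change of basis is genuinely \emph{triangular} rather than merely spanning, which is exactly what the Gram--Schmidt construction of $p_i,q_j$ guarantees — together with the standing hypothesis that $g$ is nonsingular, which keeps $\det g\neq 0$. I do not expect a real obstacle here; the statement is a standard determinantal identity and is already recorded in \cite[Proposition 5.5]{Forrester:book} and \cite[Equation (24)]{Desrosiers/Forrester:2008}.
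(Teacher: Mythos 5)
Your proof is correct and is exactly the standard argument (Andr\'eief's identity for the first equality, the unit-triangular Gram--Schmidt change of basis for the second) that underlies this proposition. The paper itself gives no proof, deferring instead to \cite[Proposition 5.5]{Forrester:book} and \cite[Equation (24)]{Desrosiers/Forrester:2008}, where the result is established by the same route; your handling of the leading-coefficient/indexing conventions via the constant $c$ is consistent with the statement.
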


\newcommand{\SortNoop}[1]{}\def\cprime{$'$} \def\cprime{$'$}
  \def\polhk#1{\setbox0=\hbox{#1}{\ooalign{\hidewidth
  \lower1.5ex\hbox{`}\hidewidth\crcr\unhbox0}}}
\providecommand{\bysame}{\leavevmode\hbox to3em{\hrulefill}\thinspace}
\providecommand{\MR}{\relax\ifhmode\unskip\space\fi MR }
\providecommand{\MRhref}[2]{%
  \href{http://www.ams.org/mathscinet-getitem?mr=#1}{#2}
}
\providecommand{\href}[2]{#2}

\end{document}